\documentclass[aps,prr,final,twocolumn,superscriptaddress]{revtex4-2}
\usepackage[colorlinks=true, allcolors=blue]{hyperref}
\usepackage{graphicx} 
\usepackage{braket}
\usepackage{tikz}
\usepackage{amsmath,amsthm,amssymb}
\usetikzlibrary{arrows}
\usepackage{bm}
\usepackage{bbold}
\usepackage{amsfonts}
\usepackage{mathtools}
\usepackage{tikz-cd}
\usepackage{verbatim}
\usepackage{graphicx}

\usepackage{caption}
\usepackage{subcaption}
\usepackage{ragged2e}
\DeclareCaptionJustification{justified}{\justifying}
\usepackage{thmtools, thm-restate}
\newtheorem{theorem}{Theorem}
\newtheorem{corollary}[theorem]{Corollary}
\newtheorem{lemma}[theorem]{Lemma}

\newtheorem{remark}[theorem]{Remark}
\usepackage{adjustbox}

\newcommand{\stot}{S_\text{tot}^z}
\newcommand{\sutwo}{\operatorname{SU}(2)}

\DeclareMathOperator{\nullity}{nullity}
\newcommand{\I}{\mathrm{i}}
\newcommand{\refEq}[1]{Eq.~(\ref{#1})}
\newcommand{\refEqTable}[1]{Eq.(\ref{#1})}
\newcommand{\refTab}[1]{Tab.~(\ref{#1})}
\newcommand{\refFig}[1]{Fig.~\ref{#1}}
\newcommand{\refApp}[1]{App.~\ref{#1}}

\newcommand{\annotateEquation}[1]{\\ [-1.3ex]%
    \scalebox{0.75}{\tiny \refEqTable{#1} \par}}

\definecolor{cbRed}{RGB}{204, 51, 17}    
\definecolor{cbGreen}{RGB}{0, 153, 136}
\definecolor{cbBlue}{RGB}{0, 119, 187}

\newcommand{\commensurate}[1]{%
  \begin{minipage}[t]{1.0\linewidth}%
  \setlength{\fboxsep}{1.5pt}
    \textcolor{cbRed}{\fbox{#1}} \annotateEquation{eq:degen-thm-summary}%
  \end{minipage}%
}

\newcommand{\incommensurateNOddQOne}[1]{%
  \begin{minipage}[t]{1.0\linewidth}%
    \bm{\textcolor{cbGreen}{#1}} \annotateEquation{eq:degen-thm-non-summary}%
  \end{minipage}%
}

\newcommand{\incommensurateNOddQMinusOne}[1]{%
  \begin{minipage}[t]{1.0\linewidth}%
    \bm{\textcolor{cbGreen}{#1}} \annotateEquation{eq:degen-thm-non-summary}%
  \end{minipage}%
}

\newcommand{\incommensurateNEven}[1]{%
  \begin{minipage}[t]{1.0\linewidth}%
    \bm{\textcolor{cbGreen}{#1}} \annotateEquation{eq:degen-thm-non-summary}%
  \end{minipage}%
}

\newcommand{\specialcaseXX}[1]{%
  \begin{minipage}[t]{1.0\linewidth}%
    \textcolor{cbBlue}{#1} \annotateEquation{eq:degen-xx}%
  \end{minipage}%
}

\newcommand{\specialcaseXXX}[1]{%
  \begin{minipage}[t]{1.0\linewidth}%
    \textcolor{cbBlue}{#1} \annotateEquation{eq:xxx}%
  \end{minipage}%
}
\newcommand{\specialcaseXXMinusX}[1]{%
  \begin{minipage}[t]{1.0\linewidth}%
    \textcolor{cbBlue}{#1} \annotateEquation{eq:degen-minusone}%
  \end{minipage}%
}

\newcommand{\invalid}[1]{%
  \begin{minipage}[t]{1.0\linewidth}%
    \textcolor{cbBlue}{ INVALID }\annotateEquation{}%
  \end{minipage}%
}
\newcommand{\badInput}[1]{%
  \begin{minipage}[t]{1.0\linewidth}%
    \textcolor{cbBlue}{ BAD INPUT } \annotateEquation{}%
  \end{minipage}%
}

\tikzset{
    twistColor/.style={cbRed},
    statesNode/.style={
    draw, thick, twistColor, rectangle, minimum width=1cm, minimum height=0.6cm, double, double distance=1pt
    },
    connection/.style={
        dashed, twistColor, thick, >={Stealth[scale=0.8]}, 
    },
    connectionDifferent/.style={
        -{Stealth[scale=0.8]}, cbGreen, densely dotted, line width=2pt,
        shorten >=0.05cm,
        shorten <=0.05cm,
        ->,
        to path={([yshift=0.1cm]\tikztostart) -- ([yshift=-0.125cm]\tikztotarget) \tikztonodes}
    },
    connectionSame/.style={
        cbBlue, line width=1.5pt,
        shorten >=0.05cm,
        shorten <=0.05cm,
        >={Stealth[scale=0.8]}, 
        ->,
        to path={([yshift=0.1cm]\tikztostart) -- ([yshift=-0.125cm]\tikztotarget) \tikztonodes}
    },
}

\begin{document}
\title{Hidden Twisted Sectors and Exponential Degeneracy \texorpdfstring{\\}{} in Root-of-Unity XXZ Heisenberg Chains}
\author{Yongao Hu}
\affiliation{Department of Physics, Massachusetts Institute of Technology, 77 Massachusetts Avenue,  Cambridge, Massachusetts, USA}
\author{Felix Gerken}
\affiliation{I. Institut für Theoretische Physik, Universität Hamburg, Hamburg, Germany}
\author{Thore Posske}
\affiliation{I. Institut für Theoretische Physik, Universität Hamburg, Hamburg, Germany}
\date{\today} 
\begin{abstract}
Recently, product states have been identified as simple-structured eigenstates of XXZ Heisenberg spin models in arbitrary dimensions, occurring at anisotropy values corresponding to certain roots of unity. Yet, the product states typically only span parts of a larger degenerate eigenspace. 
Here, we classify this eigenspace in the one-dimensional periodic XXZ chain at all roots of unity $q$, where $q^2$ is an $\ell$-th primitive root of unity. 
For commensurate chain lengths $N$ with $q^N=1$, we prove that the minimal degeneracy is $2^{N/\ell}\ell$ using the representation theory of the affine Temperley-Lieb (aTL) algebra.
For the incommensurate case, we derive analogous exponential lower bounds of $2^{2\lfloor\frac{N}{2\ell}\rfloor+1}$ if $N$ is even and $2^{2\lfloor \frac{N}{2\ell}+\frac{1}{2}\rfloor}$ if $N$ is odd and $q^\ell=1$.
Our proof employs the morphisms between aTL modules discovered by Pinet and Saint-Aubin \cite{Pinet_2022} and emphasizes the importance of exact sequences and hidden twisted boundary condition sectors that mediate the degeneracy. In the case of commensurate chain lengths, we connect to the Fabricius-McCoy string construction of all Bethe roots of the degenerate subspace, which previously uncovered parts of our results. 
We corroborate our results numerically and demonstrate that the lower bound is saturated for chain lengths $N\leq20$. Our work demonstrates for a concrete system how the interplay of the Bethe ansatz, aTL representation theory, and twisted boundary conditions explains degeneracy connected to long-lived product states \cite{Jepsen:2021gqv}, stimulating research towards generalization to higher dimensions. Exponential degeneracy could boost applications of spin chains as quantum sensors.
\end{abstract}
\maketitle

\section{Introduction}

Spin systems are archetypal quantum many-body systems that describe magnetic properties of materials, in particular their phase transitions and critical points \cite{heisenberg_zur_1928,BruusFlensberg2004,karle2021}, model special setups in cold atoms \cite{jepsen2020SpinTransportTunable,Jepsen:2021gqv} and quantum computers \cite{vandyke2022PreparingExactEigenstates,jaderberg2025VariationalPreparationNormal,lutz2025AdiabaticQuantumState,komelj2025QuantumComputingMagneticskyrmionpatterns}, and serve as theoretical laboratories for non-perturbative phenomena in high-energy physics \cite{Maldacena:1997re,PRXQuantum.4.027001,Beisert:2010jr,Minahan:2002ve,Faddeev:1994zg}.
The anisotropic spin-1/2 Heisenberg chain \cite{heisenberg_zur_1928,BAXTER1972323,deguchi_xxz_2007,wruff2013xxz,tolmay2015xxz}, where each spin interacts by exchange interaction with its nearest neighbors, is one of the simplest nontrivial models \cite{tang2026TopologicalDualXXZa}. Its eigenproblem can be solved exactly with the coordinate Bethe ansatz \cite{bethe_zur_1931,orbach1958LinearAntiferromagneticChain,YangYang1966,yang1967}, making it a testing ground for understanding quantum magnetism, entanglement, and emergent phenomena in one dimension~\cite{Korepin:1993kvr,Klumper2004, YangYang1966}, with extensions to higher-dimensional lattices~\cite{Lin1989,Barabanov1994,Waldtmann:1998mea}). 
The one-dimensional spin-$1/2$ XXZ Hamiltonian with periodic boundary condition is \cite{heisenberg_zur_1928}
\begin{equation}
\label{eq:h_1d}
H_\text{XXZ} =\sum_{i=1}^N\left(S_i^xS_{i+1}^x+S_i^yS_{i+1}^y+\Delta S_i^zS_{i+1}^z\right),
\end{equation}
where $\Delta$ is the Heisenberg anisotropy in the $z$-direction, $N$ is the total number of spins, and ${S}_i$'s are the spin operators at site $i$ with the periodic boundary condition \cite{karbach1998introduction,jianglecture}
${S}_{i+N}\equiv {S}_i.$
In the gapless regime where $|\Delta| \leq 1$, the anisotropy can alternatively be described by two other parameters  
$\gamma = \arccos{\Delta}$
or 
$q=e^{\I \gamma}.$
Special phenomena and enhanced degeneracy emerge when the anisotropy corresponds to $q$ being a root of unity \cite{PASQUIER1990523, fabricius_bethes_2001,deguchi_sl_2_2000,deguchi2002ConstructionMissingEigenvectors}. We label $q^2$ to be an $\ell$-th primitive root of unity ($\ell$ is the smallest positive integer such that $q^{2\ell}=1$).
Physically, the length $\ell$ is the minimal distance between collinear spins of the helix at that anisotropy, i.e., half the period or the full period for even and odd $N$, respectively. 
At these anisotropies, the standard Bethe ansatz breaks down due to zero-energy excitations stemming from string solutions \cite{fabricius_bethes_2001,fabricius2001_2}. The enhanced degeneracies cannot be explained by translation and spin-flip symmetry alone. Instead, they are connected to the representation theory of quantum groups \cite{quantumgroupklimyk,woit2017quantum,PASQUIER1990523,Pinet_2022,faddeev1996algebraic,Chari:1994pz,tjerk1991hopf,Jimbo:1985zk,gainutdinov2015CountingSolutionsBethe} and affine Temperley-Lieb (aTL) algebras \cite{Aufgebauer_2010,nepomechie2016UniversalBetheAnsatz,PASQUIER1990523,Pinet_2022}. 

A remarkable feature at special roots of unity are product eigenstates in spin models of diverse dimensions \cite{gerken_all_2023,popkov_phantom_2021,Cerezo2016,Jepsen:2021gqv}.
These zero-entanglement states in an otherwise highly correlated quantum system have implications for quantum thermalization \cite{Essler:2016ufo,Prosen:2013woz} and can be interpreted as many-body quantum scars \cite{Zhang:2023kxc,chandran_quantum_2023,Pizzi:2024urq,Jepsen:2021gqv}. 
In spin chains, the product eigenstates form helices that close on themselves commensurately after an integer number of windings, meaning $q^N=1$, and appear as phantom helices in chiral reformulations of the Bethe ansatz~\cite{popkov_phantom_2021,zhang_generalized_2023,Zhang_2021,Zhang:2023kxc}.
They all have the same energy as the trivial fully-polarized states $\ket{\uparrow\dots\uparrow}$ and $\ket{\downarrow\dots\downarrow}$, generally in the middle of the spectrum~\cite{gerken_all_2023} but in special cases also in the ground state \cite{Changlani2018,Batista2012}.
Yet, there is a large excess degeneracy at the product state energy, not attributed to the product eigenstates alone nor given by common symmetries of the system.
The general underlying mechanism is unknown, and insight is given only from special cases. 
For one-dimensional spin ladders, the excess degeneracy is associated with anyonic condensates that coexist with product state helices \cite{Batista2009,Batista2012}.
For diamond chains, the degeneracy stems from local symmetries \cite{Sutherland1983}.
Regarding higher-dimensional systems, the square lattice with zig-zag boundaries \cite{gerken_all_2023,Miao2025} shows excess degeneracy, which could be numerically resolved, but the ground space degeneracy of the kagome lattice on a torus is fully explained by product eigenstates and any excess is absent \cite{Changlani2018}. In this case, there is a noteworthy proximity of the product state degeneracy to a spin liquid phase \cite{Changlani2018}.

For the linear XXZ chain, Fabricius-McCoy (FM) string solutions~\cite{fabricius_bethes_2001,fabricius2001_2} of the Bethe ansatz equations at roots of unity can account for many of the degenerate states \cite{braak2001,deguchi_sl_2_2000,deguchi2002ConstructionMissingEigenvectors,deguchi_xxz_2007}.
As recently demonstrated by Ref.~\cite{Pinet_2022}, this construction is intimately connected to the representation theory of the aTL algebra at roots of unity, where spin chains are treated as aTL-modules and the morphisms between these modules alternatively explain the degeneracies. 
Explicit degeneracy formulas using FM strings exist for special cases \cite{deguchi_xxz_2007,deguchi2002ConstructionMissingEigenvectors,deguchi_sl_2_2000}, notably for the commensurate case with even $N/\ell$ \cite{deguchi2002ConstructionMissingEigenvectors}. However, these formulas rely on the FM-string hypothesis, which implies they are not rigorously proven in magnetization sectors where $\stot \not\equiv 0\mod \ell$. Likewise, exact degeneracies of the aTL-morphisms~\cite{Pinet_2022} have, to the best of our knowledge, not been calculated. 
A systematic classification of the eigenspace at product state energy thus has remained an open problem, particularly at roots of unity incommensurate with the chain length, where $q^N\neq1$. 

In this article, we provide the complete classification of the degenerate eigenspace at the product state energy in the one-dimensional periodic XXZ model at roots of unity. Our main contributions are as follows:
First, for the commensurate chains ($q^N=1$) and $q^2$ an $\ell$-th primitive root of unity with $\ell>2$,  we prove that the degeneracy $g$ at the product state energy satisfies
\begin{align}
\label{eq:degen-thm-summary}
g \geq 2^{N/\ell}\ell,
\end{align}
without relying on the FM-string hypothesis, using the representation theory of the aTL algebra and in particular the morphisms between modules constructed in Ref.~\cite{Pinet_2022}.
The product eigenstates themselves contribute a degeneracy of $2N$ for $\Delta\neq\pm1$.
We discover that the degeneracy is mediated through a hidden half-sequence of XXZ chains with {twisted} periodic boundary conditions, where instead of periodic boundary conditions 
one sets $S_{N+1}^\pm=e^{i\phi}S_1^\pm, \; S_{N+1}^z=\sigma_1^z$ for some non-zero $\phi$.
Second, we extend the construction for the product states to the general incommensurate case where $N$ is not necessarily divisible by $\ell$.
We prove the lower bounds on the degeneracy for these cases is
\begin{equation}
    \label{eq:degen-thm-non-summary}
        g \geq \begin{cases}
            2^{2\lfloor\frac{N}{2\ell}+\frac{1}{2}\rfloor}\quad &\text{for odd } N\text{ and }q^{\ell}=1,\\
            2\quad&\text{for odd } N\text{ and }q^{\ell}=-1,\\
            2^{2\lfloor\frac{N}{2\ell}\rfloor+1}\quad &\text{for even } N.
        \end{cases}
\end{equation}
This is a generalization beyond the known range and demonstrates that the morphisms in the aTL algebra found in Ref.~\cite{Pinet_2022} are able to link the degeneracy between commensurate and incommensurate periodic chains through the hidden half-sequences of twisted chains. 
Moreover, we demonstrate numerically that the lower bound is saturated for all $N\leq20$, see \refTab{tab:degeneracies}. To emphasize the physical picture and connection to the Bethe ansatz, we explicitly demonstrate how the aTL morphisms are tightly connected to the perspective of FM string structures. 

The article is organized as follows. Section~\ref{sec:review} offers a review of the XXZ model, the product state construction, and spin chains as modules over the aTL algebra. Section~\ref{sec:structure} offers a motivation of the degenerate subspace structure using the FM string construction. Finally, section~\ref{sec:algebra} connects the structure to the intertwiners between aTL modules and rigorously proves the lower bounds of degeneracies for both commensurate and incommensurate chains.

\begin{table*}
\bgroup \def\arraystretch{1.1} 
\setlength{\tabcolsep}{0.2em} 
\begin{tabular}{|c|c|*{18}{p{0.041\linewidth}|}p{0.051\linewidth}|}
\hline & & \multicolumn{19}{c|}{$N$} \\ 
\cline{3-21}
$ \frac{2 \pi }{\gamma} $ & $\ell$ & 
\multicolumn{1}{c|}{$2$} & \multicolumn{1}{c|}{$3$} & \multicolumn{1}{c|}{$4$} & 
\multicolumn{1}{c|}{$5$} & \multicolumn{1}{c|}{$6$} & \multicolumn{1}{c|}{$7$} & 
\multicolumn{1}{c|}{$8$} & \multicolumn{1}{c|}{$9$} & \multicolumn{1}{c|}{$10$} & 
\multicolumn{1}{c|}{$11$} & \multicolumn{1}{c|}{$12$} & \multicolumn{1}{c|}{$13$} & 
\multicolumn{1}{c|}{$14$} & \multicolumn{1}{c|}{$15$} & \multicolumn{1}{c|}{$16$} & 
\multicolumn{1}{c|}{$17$} & \multicolumn{1}{c|}{$18$} & \multicolumn{1}{c|}{$19$} & 
\multicolumn{1}{c|}{$20$} \\
\hline$1$ & $-$&\specialcaseXXX{$3$}&\specialcaseXXX{$4$}&\specialcaseXXX{$5$}&\specialcaseXXX{$6$}&\specialcaseXXX{$7$}&\specialcaseXXX{$8$}&\specialcaseXXX{$9$}&\specialcaseXXX{$10$}&\specialcaseXXX{$11$}&\specialcaseXXX{$12$}&\specialcaseXXX{$13$}&\specialcaseXXX{$14$}&\specialcaseXXX{$15$}&\specialcaseXXX{$16$}&\specialcaseXXX{$17$}&\specialcaseXXX{$18$}&\specialcaseXXX{$19$}&\specialcaseXXX{$20$}&\specialcaseXXX{$21$}\\ 
$2$ & $-$&\specialcaseXXMinusX{$3$}&\specialcaseXXMinusX{$2$}&\specialcaseXXMinusX{$5$}&\specialcaseXXMinusX{$2$}&\specialcaseXXMinusX{$7$}&\specialcaseXXMinusX{$2$}&\specialcaseXXMinusX{$9$}&\specialcaseXXMinusX{$2$}&\specialcaseXXMinusX{$11$}&\specialcaseXXMinusX{$2$}&\specialcaseXXMinusX{$13$}&\specialcaseXXMinusX{$2$}&\specialcaseXXMinusX{$15$}&\specialcaseXXMinusX{$2$}&\specialcaseXXMinusX{$17$}&\specialcaseXXMinusX{$2$}&\specialcaseXXMinusX{$19$}&\specialcaseXXMinusX{$2$}&\specialcaseXXMinusX{$21$}\\ 
$3$ & $3$& &\commensurate{$6$}&\incommensurateNEven{$2$}&\incommensurateNOddQOne{$4$}&\commensurate{$12$}&\incommensurateNOddQOne{$4$}&\incommensurateNEven{$8$}&\commensurate{$24$}&\incommensurateNEven{$8$}&\incommensurateNOddQOne{$16$}&\commensurate{$48$}&\incommensurateNOddQOne{$16$}&\incommensurateNEven{$32$}&\commensurate{$96$}&\incommensurateNEven{$32$}&\incommensurateNOddQOne{$64$}&\commensurate{$192$}&\incommensurateNOddQOne{$64$}&\incommensurateNEven{$128$}\\ 
$4$ & $2$& & &\specialcaseXX{$10$}&\specialcaseXX{$2$}&\specialcaseXX{$14$}&\specialcaseXX{$2$}&\specialcaseXX{$60$}&\specialcaseXX{$20$}&\specialcaseXX{$74$}&\specialcaseXX{$2$}&\specialcaseXX{$386$}&\specialcaseXX{$2$}&\specialcaseXX{$434$}&\specialcaseXX{$346$}&\specialcaseXX{$2160$}&\specialcaseXX{$2$}&\specialcaseXX{$6124$}&\specialcaseXX{$2$}&\specialcaseXX{$13106$}\\ 
$5$ & $5$& & & &\commensurate{$10$}&\incommensurateNEven{$2$}&\incommensurateNOddQOne{$4$}&\incommensurateNEven{$2$}&\incommensurateNOddQOne{$4$}&\commensurate{$20$}&\incommensurateNOddQOne{$4$}&\incommensurateNEven{$8$}&\incommensurateNOddQOne{$4$}&\incommensurateNEven{$8$}&\commensurate{$40$}&\incommensurateNEven{$8$}&\incommensurateNOddQOne{$16$}&\incommensurateNEven{$8$}&\incommensurateNOddQOne{$16$}&\commensurate{$80$}\\ 
$6$ & $3$& & & & &\commensurate{$12$}&\incommensurateNOddQMinusOne{$2$}&\incommensurateNEven{$8$}&\incommensurateNOddQMinusOne{$2$}&\incommensurateNEven{$8$}&\incommensurateNOddQMinusOne{$2$}&\commensurate{$48$}&\incommensurateNOddQMinusOne{$2$}&\incommensurateNEven{$32$}&\incommensurateNOddQMinusOne{$2$}&\incommensurateNEven{$32$}&\incommensurateNOddQMinusOne{$2$}&\commensurate{$192$}&\incommensurateNOddQMinusOne{$2$}&\incommensurateNEven{$128$}\\ 
$7$ & $7$& & & & & &\commensurate{$14$}&\incommensurateNEven{$2$}&\incommensurateNOddQOne{$4$}&\incommensurateNEven{$2$}&\incommensurateNOddQOne{$4$}&\incommensurateNEven{$2$}&\incommensurateNOddQOne{$4$}&\commensurate{$28$}&\incommensurateNOddQOne{$4$}&\incommensurateNEven{$8$}&\incommensurateNOddQOne{$4$}&\incommensurateNEven{$8$}&\incommensurateNOddQOne{$4$}&\incommensurateNEven{$8$}\\ 
$8$ & $4$& & & & & & &\commensurate{$16$}&\incommensurateNOddQMinusOne{$2$}&\incommensurateNEven{$8$}&\incommensurateNOddQMinusOne{$2$}&\incommensurateNEven{$8$}&\incommensurateNOddQMinusOne{$2$}&\incommensurateNEven{$8$}&\incommensurateNOddQMinusOne{$2$}&\commensurate{$64$}&\incommensurateNOddQMinusOne{$2$}&\incommensurateNEven{$32$}&\incommensurateNOddQMinusOne{$2$}&\incommensurateNEven{$32$}\\ 
$9$ & $9$& & & & & & & &\commensurate{$18$}&\incommensurateNEven{$2$}&\incommensurateNOddQOne{$4$}&\incommensurateNEven{$2$}&\incommensurateNOddQOne{$4$}&\incommensurateNEven{$2$}&\incommensurateNOddQOne{$4$}&\incommensurateNEven{$2$}&\incommensurateNOddQOne{$4$}&\commensurate{$36$}&\incommensurateNOddQOne{$4$}&\incommensurateNEven{$8$}\\ 
$10$ & $5$& & & & & & & & &\commensurate{$20$}&\incommensurateNOddQMinusOne{$2$}&\incommensurateNEven{$8$}&\incommensurateNOddQMinusOne{$2$}&\incommensurateNEven{$8$}&\incommensurateNOddQMinusOne{$2$}&\incommensurateNEven{$8$}&\incommensurateNOddQMinusOne{$2$}&\incommensurateNEven{$8$}&\incommensurateNOddQMinusOne{$2$}&\commensurate{$80$}\\ 
$11$ & $11$& & & & & & & & & &\commensurate{$22$}&\incommensurateNEven{$2$}&\incommensurateNOddQOne{$4$}&\incommensurateNEven{$2$}&\incommensurateNOddQOne{$4$}&\incommensurateNEven{$2$}&\incommensurateNOddQOne{$4$}&\incommensurateNEven{$2$}&\incommensurateNOddQOne{$4$}&\incommensurateNEven{$2$}\\ 
$12$ & $6$& & & & & & & & & & &\commensurate{$24$}&\incommensurateNOddQMinusOne{$2$}&\incommensurateNEven{$8$}&\incommensurateNOddQMinusOne{$2$}&\incommensurateNEven{$8$}&\incommensurateNOddQMinusOne{$2$}&\incommensurateNEven{$8$}&\incommensurateNOddQMinusOne{$2$}&\incommensurateNEven{$8$}\\ 
$13$ & $13$& & & & & & & & & & & &\commensurate{$26$}&\incommensurateNEven{$2$}&\incommensurateNOddQOne{$4$}&\incommensurateNEven{$2$}&\incommensurateNOddQOne{$4$}&\incommensurateNEven{$2$}&\incommensurateNOddQOne{$4$}&\incommensurateNEven{$2$}\\ 
$14$ & $7$& & & & & & & & & & & & &\commensurate{$28$}&\incommensurateNOddQMinusOne{$2$}&\incommensurateNEven{$8$}&\incommensurateNOddQMinusOne{$2$}&\incommensurateNEven{$8$}&\incommensurateNOddQMinusOne{$2$}&\incommensurateNEven{$8$}\\ 
$15$ & $15$& & & & & & & & & & & & & &\commensurate{$30$}&\incommensurateNEven{$2$}&\incommensurateNOddQOne{$4$}&\incommensurateNEven{$2$}&\incommensurateNOddQOne{$4$}&\incommensurateNEven{$2$}\\ 
$16$ & $8$& & & & & & & & & & & & & & &\commensurate{$32$}&\incommensurateNOddQMinusOne{$2$}&\incommensurateNEven{$8$}&\incommensurateNOddQMinusOne{$2$}&\incommensurateNEven{$8$}\\ 
$17$ & $17$& & & & & & & & & & & & & & & &\commensurate{$34$}&\incommensurateNEven{$2$}&\incommensurateNOddQOne{$4$}&\incommensurateNEven{$2$}\\ 
$18$ & $9$& & & & & & & & & & & & & & & & &\commensurate{$36$}&\incommensurateNOddQMinusOne{$2$}&\incommensurateNEven{$8$}\\ 
$19$ & $19$& & & & & & & & & & & & & & & & & &\commensurate{$38$}&\incommensurateNEven{$2$}\\ 
$20$ & $10$& & & & & & & & & & & & & & & & & & &\commensurate{$40$}\\ 
\hline 
\end{tabular}
\egroup

\caption{\label{tab:degeneracies} The numerically obtained degeneracies at product eigenstate energy assume exactly the lower bound for commensurate chains (red, boxed), incommensurate chains (cyan, bold), and  special cases (XXX, XX(-X), and XX chains) (blue) for all roots of unity and chain lengths $N\leq20$, see  Theorem~\ref{thm:degen}, Corollary~\ref{cor:degen-thm-non}, Remark~\ref{rem:degen-thm}, and \refApp{sec:numerics}.}
\end{table*}

\section{Background}
\label{sec:review}

We first review the 1D periodic spin-$1/2$ XXZ Heisenberg chain and its product eigenstates. 
The XXZ Hamiltonian of chain size $N$ acts on the Hilbert space $\mathcal{H}=\left(\mathbb{C}^2\right)^{\otimes N}$
of dimension $\dim{\mathcal{H}}= 2^N$ \cite{faddeev1996algebraic,karbach1998introduction}. The naive basis of $\mathcal{H}$ is $\{\ket{\sigma_1\sigma_2\cdots\sigma_N}\}$, where $\sigma_i \in \{\uparrow,\downarrow\}$ represents the spin of the ${i}^\text{th}$ site. The XXZ Hamiltonian commutes with the total $z$-spin operator $\stot=\sum_{i=1}^N S_i^z$, allowing one to block-diagonalize the Hamiltonian into different sectors $\mathcal{H}_M$, where $M$ denotes the number of spin-up sites \cite{YangYang1966,wruff2013xxz, jianglecture}. 
The 1D XXZ model is solvable by the Bethe Ansatz \cite{bethe_zur_1931, orbach1958LinearAntiferromagneticChain,YangYang1966}. The solution, the Bethe states, are energy and $\stot$ eigenstates of the system parameterized by rapidities $\{v_1,\dots,v_M\}$ that are solutions to the Bethe Ansatz Equations (BAE) for the XXZ model \cite{bethe_zur_1931}:
\begin{equation}
\label{eq:xxz_bae}
\left(\frac{\sinh{(v_j+i\gamma/2)}}{\sinh{(v_j-i\gamma/2)}}\right)^N=\prod_{k\neq j}^M\frac{\sinh{(v_j-v_k+i\gamma)}}{\sinh{(v_j-v_k-i\gamma)}}.
\end{equation}
Because of its relevance to the FM string construction, we review the details of the coordinate Bethe ansatz in Appendix~\ref{sec:transfer_matrix}. The BAEs form a nonlinear system, notoriously difficult to solve \cite{hagemans2007dynamics}, especially near or at the roots of unity, where the system can be singular and accommodates enhanced degeneracy \cite{fabricius_bethes_2001}. 

The product eigenstates of the periodic XXZ Heisenberg chain are labeled by local polar and azimuthal angles \cite{gerken_all_2023,zhang_generalized_2023,Cerezo2016}, 
\begin{align}
\label{eq:product_ansatz-2}
\ket{\psi} = \ket{\theta_1,\varphi_1}\otimes\ket{\theta_2,\varphi_2}\otimes\cdots\otimes\ket{\theta_N,\varphi_N},
\end{align}
with $\ket{\theta,\varphi} = \left( \cos{\theta} \ket{\uparrow} + e^{\I\varphi} \sin(\theta)\ket{\downarrow} \right) / \sqrt{2}$. The adjacent spins needs to satisfy
\begin{equation}
\label{eq:num_of_dege} 
\theta_i = \theta_j, \quad \varphi_i - \varphi_j = \gamma.
\end{equation}
There are two trivial product eigenstates at any value of $\Delta$ are $\ket{\uparrow\cdots\uparrow}$ and $\ket{\downarrow\cdots\downarrow}$.
In the gapless phase where $|\Delta| \leq 1$, nontrivial helix product eigenstates exist when winding angles add up to an integer multiple of $2\pi$, such that $q^N=e^{\I \gamma N} = 1$ \cite{deguchi_sl_2_2000}. 
For brevity, we refer to these roots of unity as {commensurate roots of unity}.
All product eigenstates have the same energy 
\begin{align}
\label{eq:product_state_energy}
\varepsilon = \Delta N / 4,
\end{align}
assuming $\hbar=1$, which we denote as the {product state energy}. Be aware that the nontrivial product eigenstates are not $S^z_{\text{tot}}$ eigenstates, having components in every $S^z_{\text{tot}}$, see \refEq{eq:product_ansatz-2}. 
We label the degenerate subspace spanned by the product eigenstates as $\mathcal{P}$. The degeneracy  of $\mathcal{P}$ at $\Delta\neq 1$ can be calculated \cite{gerken_all_2023} by counting the two fully polarized states
and, in case of commensurate roots of unity, adding the linearly independent projections of nontrivial product eigenstates with opposite helicity, for which there are two for each of the $N-1$ remaining magnetization sectors. This yields
\begin{equation}
    \label{eq:product_state_degeneracy}
    \dim \mathcal P= \begin{cases}
        2N \ &\text{commensurate case,}
        \\
        2 \ &\text{incommensurate case.}
    \end{cases}
\end{equation}
However, numerical diagonalization reveals additional non-product states at the same energy level that grows exponentially with $N$, see \refTab{tab:degeneracies}. We label the entire degenerate subspace at the product state $\mathcal{D}$, where $\mathcal{P}\subset \mathcal{D}$. 

We aim to understand the dimension and structure of the entire degenerate subspace at the product state energy. 
To this end, the typical quantum group construction for the XXZ chain \cite{Kulish1983,Jimbo:1985zk,Drinfeld:1986in,gainutdinov2015CountingSolutionsBethe} is not suitable because of the periodic boundary conditions \cite{PASQUIER1990523}.
Instead, Ref.~\cite{PASQUIER1990523} shows that 
the Hilbert space of the 1D periodic XXZ chain is a module over the aTL algebra, and crucially, 
the periodic Hamiltonian in \refEq{eq:h_1d} is an element of the aTL algebra \cite{PASQUIER1990523,Pinet_2022}. 
Here, we include a brief review \cite{Pinet_2022, PASQUIER1990523, Martin:1993jka,nepomechie2016UniversalBetheAnsatz}.  
We consider a chain of length $N$ with twisted boundary conditions \cite{PASQUIER1990523,Razumov:2001zg}:
\begin{equation}
\label{eq:twisted_bc}
    S_{N+1}^\pm=e^{ i\phi}S_1^\pm, \quad S_{N+1}^z=S_1^z,
\end{equation}
where $e^{i\phi}\equiv w$ quantifies the twist. 
The aTL algebra $\text{aTL}_N(-q-q^{-1})$ is generated by $\Omega_N$, $\Omega_N^{-1}$, $\mathbf{1}$, and $e_i$'s, where $0\leq i\leq N-1$ and $e_{i+N}=e_i$. For $N\geq3$, the generators are related by \cite{temperleylieb1971,PASQUIER1990523,Pinet_2022}:
\begin{alignat}{2}
     &e_ie_i=(-q-q^{-1}) e_i,\quad &&e_ie_{i\pm1}e_i=e_i\;\forall\,1\leq i\leq N,\nonumber\\
     &e_ie_j=e_je_i\; \text{if}\,\lvert i-j\rvert\geq2,\quad &&\Omega_Ne_i=e_{i-1}\Omega_N,\nonumber\\
     &(\Omega_N^{\pm1}e_0)^{N-1}=\Omega_N^{\pm N}(\Omega_N^{\pm1}e_0),\quad &&\Omega_N\Omega_N^{-1}=\Omega_N^{-1}\Omega_N=\mathbf{1}.
     \label{eq:atl_gen}
\end{alignat}
The generators $e_i^\pm$ can be represented with Pauli matrices \cite{Pinet_2022}:
\begin{equation}
\begin{aligned}
    e_i^\pm=&\sigma_i^-\sigma_{i+1}^++\sigma_i^+\sigma_{i+1}^-+(q+q^{-1})\sigma_i^+\sigma_i^-\sigma_{i+1}^+\sigma_{i+1}^-\\
    &-q^{\pm1}\sigma_i^+\sigma_i^--q^{\mp1}\sigma_{i+1}^+\sigma_{i+1}^-\quad \text{for}\,1\leq i\leq N-1,\\
    e_N^\pm=&w^{2}\sigma_N^-\sigma_{1}^++w^{-2}\sigma_1^+\sigma_{N}^-+(q+q^{-1})\sigma_N^+\sigma_N^-\sigma_{1}^+\sigma_{1}^-\\
    &-q^{\pm1}\sigma_N^+\sigma_N^--q^{\mp1}\sigma_{1}^+\sigma_{1}^-.
\end{aligned}
\label{eq:atl_epm}    
\end{equation}
The $\Omega_N$ and $\Omega_N^{-1}$ generators can be written as
\begin{equation}
    \Omega_N=tw^{-\sigma_1^z},\quad \Omega_N^{-1}=t^{-1}w^{\sigma_N^z},
\end{equation}
where $t$ is the translation operator $t\ket{\sigma_1 \sigma_2\dots \sigma_N}=\ket{\sigma_2\dots \sigma_N \sigma_1}$. The Hamiltonian \refEq{eq:h_1d} then becomes \cite{Pinet_2022,PASQUIER1990523}
\begin{equation}
H_\text{XXZ}=\sum_{i=1}^Ne_i^+=\sum_{i=1}^Ne_i^-.
\end{equation}
Note that the module $\mathcal{H}_{N;z}^+$ generated by $\{e_i^+,\Omega_N,\Omega_N^{-1}\}$ and the module $\mathcal{H}_{N;z}^-$ generated by $\{e_i^-,\Omega_N,\Omega_N^{-1}\}$ are two different representations on the spin Hilbert space $\mathcal{H}=\left(\mathbb{C}^2\right)^{\otimes N}$.
The two modules include the same XXZ Hamiltonian \cite{Morin-Duchesne:2012lya}, but they may not be isomorphic \cite{Pinet_2022}. 

The generators $e_i^{\pm}$'s and $\Omega_N$ commute with the total $z$-spin operator $\stot$. Hence, we can decompose $\mathcal{H}_N^\pm$ into a direct sum over different $\stot$ sectors. We introduce $d=-N+2M$, and write
\begin{equation}
\mathcal{H}_{N;w}^\pm=\bigoplus_{M=0}^N\mathcal{H}_{N;d,w}^\pm.
\end{equation}
The spin flip symmetry $d \leftrightarrow -d$ then defines an isomorphism:
\begin{equation}
s:\mathcal{H}_{N;d,w}^\pm\cong\mathcal{H}_{N;-d,w^{-1}}^\mp.
\end{equation}
We can then employ the representation theory of $\text{aTL}_N(-q-q^{-1})$ to understand the spectrum of the 1D periodic XXZ spin chain. Note that the aTL algebra is not the symmetry algebra of the periodic XXZ chain because $H_\text{XXZ}$ does not commute with generic $e_i$'s due to the periodic boundary condition. 

\section{Motivation: Fabricius-McCoy strings}
\label{sec:structure}
Before advancing to the algebraic abstract description, let us
motivate the structure of $\mathcal D$ and origin of the degeneracy when the chain length and the roots of unity are commensurate, i.e., the chain contains an integer number of windings $\gamma N/2\pi$. 
In this case, there is a physically appealing explicit construction by FM strings \cite{fabricius_bethes_2001,fabricius2001_2}, which is built on the conjecture that there exist zero-energy excitations in the form of string operators, which manifests in the BAEs as singular solutions. 
The fundamental idea is that product eigenstates are the starting point for a tower construction of the basis of $\mathcal{D}$: the product eigenstates are boosted by the FM string operators to all possible other basis states.
Note that the towers do not constitute a rigorous proof for the lower bounds for the degeneracy, especially in sectors where $\ell\nmid d$, because the completeness of the FM string operators is--to the best of our knowledge--not proven and hinges on the {string hypothesis} of the Bethe Ansatz \cite{fabricius_bethes_2001,deguchi_xxz_2007,Isler_1993,Hou_2024}. The rigorous construction is delegated to section~\ref{sec:algebra}.
At commensurate roots of unity where $q^N=1$, note that $\ell\mid N$. Refs.~\cite{fabricius_bethes_2001,fabricius2001_2} conjecture and numerically demonstrate that the FM string consists of $\ell$ Bethe roots equally spaced in the imaginary direction around a string center $\alpha^\text{FM}$ \cite{fabricius_bethes_2001,miao_q_2021}: 
\begin{equation}
\label{eq:fm-string}
v_m = \alpha^{\text{FM}}+\frac{2m-1-\ell}{2\ell}i\pi,\quad 1\leq m \leq \ell.
\end{equation}
The special property of FM roots is that they are zero-energy excitations that can be combined with other Bethe solutions \cite{fabricius_bethes_2001,fabricius2001_2,miao_q_2021}. For the tuple of Bethe roots $(v_j)_{j=1}^M$, let FM string solutions be where $j=1,\cdots,\ell$ and other ``ordinary'' solution be where $j=\ell+1,\cdots,M$. The FM roots do not interact with the ordinary roots \cite{fabricius_bethes_2001,miao_q_2021}:
\begin{equation}
    \label{eq:nonscatter}
    \prod_{j=1}^\ell\frac{\sinh{(v_j-v_k+i\gamma)}}{\sinh{(v_j-v_k-i\gamma)}}=1,\quad \ell<k\leq M. 
\end{equation}
Therefore, one can add FM strings to ordinary solutions to evolve to another solution in a different $\stot$ sector \cite{fabricius_bethes_2001,fabricius2001_2,miao_q_2021}. 
The existence of the FM strings as singular solutions of the BAEs at commensurate roots of unity leads to the claim that the conventional method of solving the BAEs is incomplete at roots of unity \cite{fabricius_bethes_2001}. 

Next, we use the structure of FM strings at commensurate roots of unity to conjecture the structure of the degenerate subspace at the product energy level $\mathcal{D}$. 
If $\Delta=0$, there are further symmetries and the system reduces to the XX model, where the Jordan-Wigner transformation enables an analytical solution. We exclude this case from the main text but present the structure of $\mathcal{D}$ in the XX model in Appendix \ref{sec:xx}. 
For $0<\vert\Delta\vert<1$, Refs.~\cite{deguchi_xxz_2007,miao_q_2021} hypothesize that the entire spectrum of the 1D XXZ spin chain at roots of unity can be described using descendant towers, where one adds $\ell$ FM roots to evolve a solution of sector $M$, a {primitive state}, to relate to a degenerate state at sector $M+\ell$ \cite{deguchi_xxz_2007,miao_q_2021}. 
There are two types of zero-energy excitations: the FM strings, and infinity roots of the BAEs \cite{popkov_phantom_2021,miao_q_2021}. The infinity or ``phantom'' roots correspond to solutions of the BAEs where the wavefunction carries no finite rapidity excitations, and hence no nontrivial scattering phases ~\cite{popkov_phantom_2021}.
Thus, there can be several types of descendant towers depending on the initial primitive states, the existence of phantom roots, and the number of corresponding FM roots \cite{miao_q_2021}.

\begin{figure*}[htbp!]
    \centering
    \begin{subfigure}{\textwidth}
    \caption{}
        \centering
        \begin{tikzpicture}[scale=0.9]
          \tikzset{every picture/.style={line width=0.75pt}}
          \tikzset{>=stealth}
          
          \node (a0) at (0,0) {$\emptyset$};
          
          \node (a1) at (-3,-2) {$\{\alpha^{FM}_1\}$};
          \node (a2) at (0,-2) {$\{\alpha^{FM}_2\}$};
          \node (a3) at (3,-2) {$\{\alpha^{FM}_3\}$};
          
          \node (b12) at (-3,-4) {$\{\alpha^{FM}_1, \alpha^{FM}_2\}$};
          \node (b13) at (0,-4) {$\{\alpha^{FM}_1, \alpha^{FM}_3\}$};
          \node (b23) at (3,-4) {$\{\alpha^{FM}_2, \alpha^{FM}_3\}$};
          
          \node (c123) at (0,-6) {$\{\alpha^{FM}_1, \alpha^{FM}_2, \alpha^{FM}_3\}$};
          
          \draw[->] (a0) -- (a1);
          \draw[->] (a0) -- (a2);
          \draw[->] (a0) -- (a3);
          
          \draw[->] (a1) -- (b12);
          \draw[->] (a1) -- (b13);
          \draw[->] (a2) -- (b12);
          \draw[->] (a2) -- (b23);
          \draw[->] (a3) -- (b13);
          \draw[->] (a3) -- (b23);
          
          \draw[->] (b12) -- (c123);
          \draw[->] (b13) -- (c123);
          \draw[->] (b23) -- (c123);
          
          \node (alpha) at (6,-0.66) {$\{\pm\infty\}$};
          \node (alpha1) at (4.5,-2.66) {$\{\pm\infty,\alpha_4^{FM}\}$};
          \node (alpha2) at (7.5,-2.66) {$\{\pm\infty,\alpha_5^{FM}\}$};
          \node (alpha3) at (6,-4.66) {$\{\pm\infty,\alpha_4^{FM},\alpha_5^{FM}\}$};
          
          \draw[->] (alpha) -- (alpha1);
          \draw[->] (alpha) -- (alpha2);
          \draw[->] (alpha1) -- (alpha3);
          \draw[->] (alpha2) -- (alpha3);
          
          \node (beta) at (11,-1.33) {$\{\mp \infty,\mp \infty\}$};
          \node (beta1) at (9.5,-3.33) {$\{\mp \infty,\mp \infty, \alpha_4^{FM}\}$};
          \node (beta2) at (12.5,-3.33) {$\{\mp \infty,\mp \infty, \alpha_5^{FM}\}$};
          \node (beta3) at (11,-5.33) {$\{\mp \infty,\mp \infty,\alpha_4^{FM},\alpha_5^{FM}\}$};
          
          \draw[->] (beta) -- (beta1);
          \draw[->] (beta) -- (beta2);
          \draw[->] (beta1) -- (beta3);
          \draw[->] (beta2) -- (beta3);
          \draw[dashed] (-6, -3) -- (14, -3);
          
          \node at (-5, 0) {$M=0$};
          \node at (-5, -2) {$M=3$};
          \node at (-5, -4) {$M=6$};
          \node at (-5, -6) {$M=9$};
          
          \node at (-5, -0.66) {$M=1$};
          \node at (-5, -1.33) {$M=2$};
          \node at (-5, -2.66) {$M=4$};
          \node at (-5, -3.33) {$M=5$};
          \node at (-5, -4.66) {$M=7$};
          \node at (-5, -5.33) {$M=8$};
        \end{tikzpicture}
        \label{fig:descendant_tower_a}
    \end{subfigure}
    \\[1.5em]
    \begin{subfigure}{\textwidth}
    \caption{}
    \centering
        \begin{tabular}{|c|l|}
        \hline
        Spin sector & Bethe roots\\
        \hline
        $M=0$ & $\emptyset$ \\
        \hline
        $M=1$ &    $\{\infty\},\{-\infty\}$\\
        \hline
        $M=2$ &    $\{\infty,\infty\},\{-\infty,-\infty\}$\\
        \hline
        $M=3$ & $\{0.7365_{FM}\}, \{-0.7365_{FM}\}, \{0_{FM}\}$ \\
        \hline
        $M=4$ & $\{\infty,0.3624_{FM}\},\{\infty,-0.3624_{FM}\},\{-\infty,0.3624_{FM}\},\{-\infty,-0.3624_{FM}\}$\\
        \hline
        $M=5$ & $\{\infty,\infty,0.3624_{FM}\},\{\infty,\infty,-0.3624_{FM}\},\{-\infty,-\infty,0.3624_{FM}\},\{-\infty,-\infty,-0.3624_{FM}\}$\\
        \hline
        $M=6$ & $\{0.7365_{FM},0_{FM}\},\{-0.7365_{FM},0_{FM}\},\{0.7365_{FM},-0.7365_{FM}\}$\\
        \hline
        $M=7$ & $\{\infty,0.3624_{FM},-0.3624_{FM}\},\{-\infty,0.3624_{FM},-0.3624_{FM}\}$\\
        \hline
        $M=8$ & $\{\infty,\infty,0.3624_{FM},-0.3624_{FM}\},\{-\infty,-\infty,0.3624_{FM},-0.3624_{FM}\}$\\
        \hline
        $M=9$ & $\{0.7365_{FM},-0.7365_{FM},0_{FM}\}$\\
        \hline
        \end{tabular}
        \label{fig:descendant_tower_b}
    \end{subfigure}
    \caption{(\subref{fig:descendant_tower_a}) The descendant tower structure of the FM construction reveals a hypothesis for the structure and degeneracy of $\mathcal{D}$. Visualization adapted from Ref.~\cite{miao_q_2021}. For $M=0$, the primitive state is $\ket{\downarrow\dots\downarrow}$, and there are 3 FM strings, resulting in a degeneracy of 8 states. For $M=1,2$, at each sector there are two roots at infinity, and there are 2 FM strings, resulting in a degeneracy of $4\times3=12$ states. In total, there are 24 degenerate states, of which 20 are projected product states and 4 are additional non-product states. The dashed line indicates ``the equator'' where the spin-flip symmetry connects sectors above and below the line, which  becomes manifest in \refFig{fig:diagram_twolines_b} where we include the hidden sectors with non-zero twists. The left tower corresponds to the sequence in Fig.~\ref{fig:diagram_oneline}, while the right two towers are interlinked and correspond to the sequences in Fig.~\ref{fig:diagram_twolines}. (\subref{fig:descendant_tower_b}) Listing of the Bethe roots. $\alpha_{FM}$ indicates 3-strings  of the form $\{\alpha,\alpha-i\pi/3,\alpha+i\pi/3\}$ with string center $\alpha$.}
    \label{fig:descendant_tower}
\end{figure*}

Using the FM construction for commensurate chains, we observe that $\mathcal{D}$ can be decomposed into descendant towers with phantom states as primitive states that are linear combinations of projected eigenstates with $M = 1, \cdots, \ell-1$. 
By combinatorial calculations, we find that for $\Delta\neq0,\pm1$, the number of states collected in the descendant towers is $2^{\frac{N}{\ell}}\ell$ and, assuming linear independence, hypothesize the  degeneracy $g$ at commensurate roots of unity to be
\begin{equation}
    g = 2^{\frac{N}{\ell}}\ell.
    \label{eq:degen-hypo}
\end{equation}
This matches the construction of Ref.~\cite{deguchi2002ConstructionMissingEigenvectors}, which utilizes the FM string construction but is only rigorously proven for sectors where $\stot \equiv 0\mod \ell$. For a single-wounded helix, i.e., $\ell$ = $N$, the degeneracy is $g = 2N$ and only constructed by product eigenstates, recovering \refEq{eq:product_state_degeneracy}. Excess degeneracy appears when helices have multiple windings. Unlike the product state degeneracy, the excess degeneracy increases exponentially with the number of windings and $N$. However, it is still a measure-$0$ set in the thermodynamic limit because for $\ell>1$, $(2^{\frac{N}{\ell}}\ell) / (2^N) = 2^{N(\frac{1}{\ell}-1)}\ell \xrightarrow{N\to \infty} 0$. 

An example of the descendant towers applied to the product state energy for $N=9,\,\Delta=-1/2,\, \ell=3$ is shown in Fig.~\ref{fig:descendant_tower} \cite{miao_q_2021,Hou_2024}. The corresponding Bethe roots are also shown following derivations in Ref.~\cite{miao_q_2021}. According to the FM string construction, we have strings of length of 3: the FM Bethe roots are $\alpha_\text{FM}-i\pi/3,\,\alpha_\text{FM},\,\alpha_\text{FM}+i\pi/3$. From the trivial state at $M=0$ $\ket{\downarrow\dots\downarrow}$, there are 3 different FM strings with string centers $
\alpha_\text{FM}=\pm0.7365,0$, resulting in a degeneracy of 8 states. For $M=1,2$, at each sector there are 2 phantom states with rapidities at infinity, and there are 2 different FM strings with string centers $
\alpha_\text{FM}=\pm0.3624$, resulting in a degeneracy of $4\times3=12$ states. Hence, we expect a total multiplicity of $24$ from \refEq{eq:degen-hypo}. 
We leave the detailed discussion of the calculation of the Bethe roots at roots of unity in Appendix~\ref{sec:transfer_matrix}.

At $\Delta=\pm1$, there are additional symmetry constraints. For commensurate chains (any chain length if $\Delta = 1$ and any even chain length if $\Delta = -1$), each string connects sector $M=j$ to $M=j+1$. Starting from $M=0$, we have $N+1$ states on each string. There is only one string, compared to the $N/\ell$ strings in the anisotropic case. Hence, the degeneracy $g$ at $\Delta=\pm1$ can be hypothesized to be
\begin{equation}
     \label{eq:degen-hypo-pm1}
     g = N+1,
 \end{equation}
matching the numerically obtained values in \refTab{tab:degeneracies}.

\section{The Structure of the Degenerate Subspace}
\label{sec:algebra}

This section offers a rigorous account of the lower bound of the degeneracy hypothesized in \refEq{eq:degen-hypo}, and extends the result to all roots of unity. 
The main result of the article is as follows: 
\begin{theorem}[Lower bound for commensurate case]
    Given a 1D XXZ periodic spin-$1/2$ chain of length $N$ and anisotropy $q$ at a commensurate root of unity ($q^N=1$), let $q^2$ be an $\ell^\text{th}$ primitive root of unity. For $\Delta\neq0,\pm1$, the total degeneracy $g$ at the product state energy satisfies 
\begin{equation}
    \label{eq:degen-thm} 
        g \geq 2^{\frac{N}{\ell}}\ell. 
\end{equation}
\label{thm:degen}
\end{theorem}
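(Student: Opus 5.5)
Fix $\ell>2$ with $\ell\mid N$ and write $N=k\ell$. The plan is to bound the degeneracy sector by sector in $M$, using the decomposition $\mathcal{H}^\pm_{N;w}=\bigoplus_d \mathcal{H}^\pm_{N;d,w}$ together with the Pinet--Saint-Aubin morphisms between aTL modules. We then sum the sectorwise bounds to obtain $2^{k}\ell$.

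\textbf{Step 1: the target count.} The Fabricius--McCoy towers of Section~\ref{sec:structure} suggest the following split of the expected answer. For $M\equiv 0 \pmod{\ell}$, say $M=j\ell$, we expect $\binom{k}{j}$ states. For each residue $r\in\{1,\dots,\ell-1\}$ and $M=r+j\ell$, we expect $2\binom{k-1}{j}$ states. Summing gives
\[
\sum_j\binom{k}{j}+(\ell-1)\cdot 2\cdot 2^{k-1}=2^k+(\ell-1)2^k=2^k\ell .
\]
So it suffices to prove the lower bound $\dim(\mathcal{D}\cap\mathcal{H}_M)\ge \binom{k}{j}$, respectively $2\binom{k-1}{j}$, in each sector.

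\textbf{Step 2: the starting states at the product energy.} In sectors $M$ and $N-M$ we already have the projections of the product states: the fully polarized states and the two helicity projections for $0<M<N$. These give $2N$ states, but that only covers the $j=0$ (and dual) layers. We need a mechanism that transports eigenvectors at energy $\Delta N/4$ between sectors, including sectors with a nonzero twist.

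\textbf{Step 3: the morphisms.} We use the nonzero aTL morphisms of Ref.~\cite{Pinet_2022} between $\mathcal{H}^+_{N;d,w}$ and $\mathcal{H}^-_{N;d',w'}$. These exist when $d'=d\pm 2\ell$ (times a multiple), with twists related by $w'=wq^{\pm\ldots}$; at commensurate $q$ they shift the twist by a power of $q^{\ell}$, which is where the hidden twisted sectors with $w^2\neq1$ enter. Because the Hamiltonian is the aTL element $\sum_i e_i^+=\sum_i e_i^-$, any aTL morphism $f$ satisfies $fH=Hf$. Hence $f$ maps eigenspaces to eigenspaces, the eigenvalue $\Delta N/4$ is preserved, and the relevant eigenvalue is shared by the twisted chains.

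We then assemble, for each residue class of $d \bmod 2\ell$, a chain of such morphisms
\[
\cdots\to\mathcal{H}_{N;d,w}\to\mathcal{H}_{N;d+2\ell,w'}\to\cdots .
\]
This is the "half-sequence" through twisted chains, and we aim to show it is exact at the relevant positions.

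\textbf{Step 4: from exactness to dimension bounds.} Exactness of $0\to A\to B\to C$ restricted to the $\varepsilon$-eigenspaces gives additivity of dimensions in the form $\dim B_\varepsilon\ge \dim A_\varepsilon+\dim(\operatorname{im})_\varepsilon$. Feeding in the known product-state vectors at the ends of each sequence, a Pascal-type recursion then yields the binomial counts $\binom{k}{j}$ and $2\binom{k-1}{j}$. Spin flip $s$ relates $d\leftrightarrow -d$ and halves the work.

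\textbf{Main obstacle.} The hardest part is establishing exactness, or at least injectivity and sufficient rank, of the morphism sequence restricted to the generalized eigenspace at $\varepsilon$. The concern is that some morphisms might annihilate precisely the states we need, and that Jordan blocks could appear since $H$ need not be diagonalizable on twisted chains. I would first try to compute the kernels and images of the Pinet--Saint-Aubin maps on the standard/cell module filtrations. I would then use the fact that the periodic ($w=1$) Hamiltonian is Hermitian to pass from generalized eigenvectors to genuine eigenvectors at the end.
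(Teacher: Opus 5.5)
Your outline has a genuine gap at Step 4: exactness cannot produce the exponential count from the product-state vectors. On the $\varepsilon$-eigenspaces, exactness only gives $\dim V_i=\operatorname{rank} d_{i-1}+\operatorname{rank} d_i$, with ranks you do not control. As far as exactness is concerned, a sequence whose end terms are the one-dimensional fully polarized sectors could have dimensions $1,1,0,\dots,0,1,1$. Seeding the interior with helicity projections also never yields more than the $2N$ vectors you put in. So no Pascal recursion follows without an independent source of $2^{N/\ell}$ states.

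The paper takes this source from Deguchi's loop-algebra theory. In sectors with $\ell\mid d$ and twist a power of $q$, the fully polarized state is an $L(\mathfrak{sl}_2)$ highest-weight vector whose Drinfeld polynomial has degree $N/\ell$. It therefore generates an irreducible multiplet of dimension $2^{N/\ell}$ at the product energy. With that input, only totals are needed, not your sector-by-sector binomials. The vanishing Euler characteristic of each restricted exact sequence equates the total over its periodic sectors with the total $2^{N/\ell}$ over its twisted $\ell\mid d$ sectors. For $q^\ell=1$ there are $\ell-1$ such sequences (twists $q^{\pm r}$, $1\le r<\ell/2$). The purely periodic sequence through the $\ell\mid d$ sectors contributes a further $2^{N/\ell}$ directly, giving $2^{N/\ell}\ell$. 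Your per-sector targets $\binom{k}{j}$ and $2\binom{k-1}{j}$ are stronger than needed. They would require the ranks of the individual maps, which your plan leaves to an unspecified computation.

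Step 3 also misdescribes the morphisms, and this breaks your chains. The Pinet--Saint-Aubin maps act within $\mathcal{H}^+$ or within $\mathcal{H}^-$, not between them. They connect $(d,w)$ with $(d+2m,wq^{\mp m})$ subject to $w^2=q^{\pm(d+2m)}$, and $m$ is not tied to $\ell$. Take $q^\ell=1$ and a periodic sector $(d,1)$ with $\ell\nmid d$. Then the succession conditions force every neighbouring sector to have $\ell\mid d'$ and twist $q^{\pm m}\neq1$ with $m\not\equiv 0\pmod{\ell}$. For example, as the lower sector, $w=1$ in $w^2=q^{\pm t}$ gives $\ell\mid t$. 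Hence a chain $\mathcal{H}_{N;d,1}\to\mathcal{H}_{N;d+2\ell,w'}$ with $w'$ a power of $q^\ell$ does not exist there. Multiplying by powers of $q^\ell=\pm1$ could also never reach the $w^2\neq1$ sectors you invoke. The correct sequences are not organized by the residue of $d$ modulo $2\ell$. They alternate between twisted $\ell\mid d$ sectors and periodic $\ell\nmid d$ sectors, which is exactly what lets the $L(\mathfrak{sl}_2)$ count be transferred.

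Your worry about Jordan blocks, by contrast, is not an obstacle. For $|w|=1$ the twisted Hamiltonian is Hermitian, so the diagonalizability used in the restriction lemma holds.
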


\begin{corollary}[Lower bound for incommensurate case]
    At an incommensurate root of unity, let $q^2$ be an $\ell^{\text{th}}$ primitive root of unity. If $N>\ell$, the total degeneracy $g$ at the product state energy at $\Delta\neq0,\pm1$ satisfies:
\begin{equation}
    \label{eq:degen-thm-non}
        g \geq \begin{cases}
            2^{2\lfloor\frac{N}{2\ell}+\frac{1}{2}\rfloor}\quad &\text{for odd } N\text{ and }q^{\ell}=1,\\
            2\quad&\text{for odd } N\text{ and }q^{\ell}=-1,\\
            2^{2\lfloor\frac{N}{2\ell}\rfloor+1}\quad &\text{for even } N.
        \end{cases}
\end{equation}
\label{cor:degen-thm-non}
\end{corollary}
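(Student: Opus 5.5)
We prove Corollary~\ref{cor:degen-thm-non} by running, sector by sector, the same machinery as Theorem~\ref{thm:degen}: the Pinet--Saint-Aubin morphisms between twisted modules $\mathcal{H}^\pm_{N;d,w}$ and the resulting exact sequences. The basic observation is this. A nonzero aTL-morphism $f:\mathcal{H}^{\pm}_{N;d,w}\to\mathcal{H}^{\mp}_{N;d',w'}$ commutes with $H_\text{XXZ}=\sum_i e_i^\pm$, because the Hamiltonian is an element of the algebra. Hence $f$ carries the $\varepsilon=\Delta N/4$ eigenspace of one sector into that of another. If, in addition, an eigenvector at $\varepsilon$ is known in the source, exactness of the sequence forces either injectivity or a compensating kernel. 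So the task is to locate, for incommensurate $q$, which pairs $(d,w)$ are linked by these morphisms and to count the $\varepsilon$-eigenvectors propagated along the chains.

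The steps, in order, are as follows.
\begin{enumerate}
\item Recall the morphism condition from Ref.~\cite{Pinet_2022}. A morphism $\mathcal{H}_{N;d,w}\to\mathcal{H}_{N;d+2k,wq^{\pm k}}$ exists, roughly, when $q^{2k}=1$ and the twist shift is compatible, i.e.\ $k\equiv 0 \bmod \ell$.
\item Start from the periodic sectors $w=1$ and follow the shifts $d\mapsto d\pm 2\ell$. The twist gets multiplied by $q^{\pm\ell}=\pm1$, and when $q^N\neq1$ it also picks up boundary phases. I will record these as the ``hidden half-sequences'' of twisted chains.
\item Note that only the untwisted members of a sequence are physical sectors of the periodic chain. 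A twisted module, however, sits in the middle of an exact sequence $0\to A\to \mathcal{H}_{\text{twisted}}\to B\to 0$ whose ends are untwisted. Exactness then transfers $\varepsilon$-eigenvectors of the twisted chain into the periodic chain.
\item Seed each sequence with the only product states available in the incommensurate case, $\ket{\uparrow\cdots\uparrow}$ and $\ket{\downarrow\cdots\downarrow}$ (so $\dim\mathcal P=2$). Propagate them along the sequence by the morphisms, using the spin-flip isomorphism $s$ to pair sectors $d\leftrightarrow -d$.
\item Count. From $d=-N$, the sectors reachable in steps of $2\ell$ while staying on one side of the equator number about $\lfloor N/(2\ell)\rfloor$. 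Each gives a binary choice, mirroring the $2^{N/\ell}$ combinatorics of the commensurate towers, and adding the equator and the flipped copy gives $2^{2\lfloor N/2\ell\rfloor+1}$ for even $N$.
\item Handle odd $N$ separately. There is no $d=0$ sector, and the twist accumulated over the sequence is $q^{\ell}$. If $q^\ell=1$ the sequence closes back onto untwisted modules; rounding $N/(2\ell)$ to the nearest integer then gives $2^{2\lfloor N/2\ell+1/2\rfloor}$. If $q^\ell=-1$ every intermediate module is twisted by $-1$, no morphism lands in a periodic sector, and only the two polarized states survive, giving $g\ge2$.
\end{enumerate}

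The counting is essentially Theorem~\ref{thm:degen}'s, with $N/\ell$ replaced by the number of $2\ell$-steps that fit. It amounts to a binomial sum $\sum_j\binom{m}{j}=2^m$ over choices of morphism paths; linear independence follows because the eigenvectors live in distinct $\stot$ sectors or arise as distinct images or kernels within one sector.

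The main obstacle is step 3: proving that the relevant morphisms are nonzero, and that the exact sequences are genuinely non-split, at incommensurate twists. The commensurate proof can exploit $q^N=1$ to make the product helices themselves eigenvectors in every sector; here the intermediate twisted chains carry no product states. I would first attempt to show nonvanishing directly. Since the morphisms of Ref.~\cite{Pinet_2022} are built from $q$-deformed symmetrizers, I would evaluate them explicitly on the polarized state $\ket{\downarrow\cdots\downarrow}$, which is an $\varepsilon$-eigenvector in every twist, and check that its image is nonzero. The parity and rounding in the odd-$N$ case is the second place I expect careful bookkeeping to be needed.
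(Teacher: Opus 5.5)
Your outline names the right ingredients, but two load-bearing steps fail.

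\textbf{The morphism rule in your step~1 is wrong.} The rule in Theorem~\ref{thm:morphism} is that $(t,v)$ succeeds $(d,w)$ when $t=d+2m$, $w^2=q^{\pm t}$ and $v=wq^{\mp m}$, where $m$ need not be a multiple of $\ell$. Your rule (steps of $2\ell$, twist multiplied by $q^{\pm\ell}$) has two consequences:
\begin{itemize}
\item no twisted sector ever appears when $q^\ell=1$;
\item the polarized sector $(N;N,1)$ is linked to nothing, because a $2\ell$-step out of it requires $q^N=1$, which is exactly what fails in the incommensurate case.
\end{itemize}
So the seeds of step~4 have nowhere to go. The ``boundary phases'' of your step~2 have no counterpart in the succession rule. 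In the paper one writes $N=k\ell-2r$. The maps out of $(N;N,1)^\mp$ then have $m=\ell-r$ and land in the hidden sectors $((k-2)\ell,q^{\pm r})^\mp$, which satisfy $\ell\mid d$. The whole relevant structure is the truncation of the commensurate $k\ell$ diagram (Fig.~\ref{fig:diagram_oddl_nondiv}). For odd $N$ and $q^\ell=-1$ the obstruction is parity, not a global sign: a periodic sector needs a partner $t$ with $q^{\pm t}=1$, i.e.\ $2\ell\mid t$, which is impossible for $t\equiv N$ odd (Lemma~\ref{lem:oddl_oddN}). The bound $g\ge 2$ is trivial there anyway.

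\textbf{Nothing in your plan produces exponentially many independent eigenvectors at the product state energy.} Propagating two seed vectors by morphisms cannot do it. Along an exact row consecutive maps compose to zero, so a seed survives at most one step. You give no argument that images along other paths are nonzero or linearly independent. The binomial count of step~5 is fitted to the answer rather than derived.

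The missing input is the $L(\mathfrak{sl}_2)$ loop-algebra symmetry of the twisted sectors with $\ell\mid d$ and twist a power of $q$. There the eigenspace at the product state energy carries a highest-weight multiplet of dimension $2^{\deg P}$, with $P$ the Drinfeld polynomial. For $q^\ell=1$ and $N=k\ell-2r$, the hidden sectors start at $d=(k-2)\ell$, and this gives $2^{k-1}$ over the two rows. Expressing $k$ through $N$ yields the stated floors. Lemma~\ref{lem:restricted-exact} and Corollary~\ref{cor:alternate} then equate, along each exact row, the multiplicity in the periodic sectors with that in the interleaved twisted ones.

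Your step~3 gestures at this, but exactness transfers dimensions rather than vectors, and only once the twisted-sector dimensions are known. For the same reason, the obstacle you single out is not what matters. That obstacle was nonvanishing of the maps on $\ket{\downarrow\cdots\downarrow}$ (which the lowering maps $F$ annihilate anyway) together with non-splitness. What you must import is the twisted-sector multiplicity.
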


\begin{remark}[Special cases: $\Delta = 0, \pm 1$]%
\label{rem:degen-thm}%
\mbox{}

    \begin{enumerate}
        \item[(i)] If $\Delta = 1$ (XXX model), 
        \begin{equation}
            g = N+1.
            \label{eq:xxx}
        \end{equation}
        \item[(ii)] If $\Delta = -1$ (XX(-X) model), 
        \begin{alignat}{2}
                &g=N+1\quad&&\text{for even } N,\nonumber\\
                &g\geq 2\quad&&\text{for odd } N.
            \label{eq:degen-minusone}
        \end{alignat}
        From the perspective of incommensurate and commensurate chains, for $\Delta=1$, $q$ is a commensurate root of unity for all $N$; for $\Delta =-1$,  $q$ is a commensurate root of unity only for even $N$. 
        
        \item[(iii)] If $\Delta=0$ (XX model), 
        $g$ is "the number of subsets $K$ of $\{1,\dots,N\}$ such that the sum of cosines of the angles in $\{(2j + (1 + (-1)^{|K|})/2 )\pi/N | j \in K\}$ is zero."
        \cite{oeisA392387}, i.e.,
                \begin{align}
        \label{eq:degen-xx}
            g = \operatorname{OEIS_{A392387}}(N).
        \end{align}
    \end{enumerate}
We discuss these special cases in detail in Appendix \ref{sec:xx}. 
\end{remark}
We numerically find that the inequalities are generically saturated, i.e., there is no degeneracy we could not account for, for all roots of unity and chain lengths $N\leq20$, see \refTab{tab:degeneracies}. 

Next, we offer a representation-theoretic explanation on the structure of the degenerate subspace $\mathcal D$ and rigorous proofs of Theorem~\ref{thm:degen} and Corollary~\ref{cor:degen-thm-non}. 

\subsection{Construction of Sequences of Intertwiners}

In this subsection, we use aTL$_N$-linear morphisms to show that there exists a ``hidden'' half-sequence with sectors with twisted boundary conditions in between the sectors with periodic boundary conditions that mediates the degeneracy. 
Ref.~\cite{Pinet_2022} has identified aTL$_N$-linear morphisms between different sectors, the intertwiners, at roots of unity. 
\begin{theorem}[Pinet, Saint-Aubin \cite{Pinet_2022}]
\label{thm:morphism}
    Let $q^2$ be an $\ell$-th primitive root of unity. Suppose $\ell\geq2$. Let $t,d$ be integers, and $|v|=|w|=1$. Define $(t,v)$ to succeed $(d,w)$ if there exists a non-negative integer $m$ satisfying $t=d+2m$ and
\begin{equation}
\label{eq:sucession}
    (a)\; w^2=q^t,\; v=wq^{-m},\quad \text{or}\quad (b)\; w^2=q^{-t},\;v=wq^m.
\end{equation}
    If $(t,v)$ succeeds $(d,w)$, and they correspond to two sectors of the spin chain of length $N$ ($|t|,|d|\leq N$, $t,d\equiv N\mod 2$), the following maps $F^\mp_{(d,v);(t,w)}$ and $E^\pm_{(t,v);(d,w)}$ are aTL$_N$-linear morphisms:
    \begin{align}
\label{eq:morphism}
&F^\mp_{(d,w);(t,v)}:\mathcal{H}_{N;t,v}^\mp\to\mathcal{H}_{N;d,w}^\mp,\\
&E^\pm_{(t,v);(d,w)}:\mathcal{H}_{N;d,w}^\pm\to\mathcal{H}_{N;t,v}^\pm,
\end{align}
where if the succession is through condition (a) of \refEq{eq:sucession}, we take the top signs; if the succession is through condition (b) of \refEq{eq:sucession}, we take the bottom signs. 
In particular, if $(d,w)$ is the direct successor of $(t,v)$ through condition (a) (resp. (b)), $(s,u)$ is the successor of $(d,w)$ through condition (a) (resp. (b)), and $\frac{1}{2}(s-d)\not\equiv 0\mod{\ell}$ (resp. $\frac{1}{2}(s+d)\not\equiv 0\mod{\ell}$), then the following sequence is exact:
\begin{equation}
    \mathcal H^\pm_{N;t,v}\xrightarrow{F^\pm} \mathcal H^\pm_{N;d,w}\xrightarrow{F^\pm} \mathcal H^\pm_{N;s,u}.
\end{equation}
\end{theorem}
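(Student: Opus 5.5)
The plan follows the quantum-group construction of Pasquier and Saleur~\cite{PASQUIER1990523}, adapted to the periodic twisted setting. Everything is done in the $e_i^+$ picture; the $e_i^-$ statements then follow from the spin-flip isomorphism $s:\mathcal{H}_{N;d,w}^\pm\cong\mathcal{H}_{N;-d,w^{-1}}^\mp$, which exchanges conditions (a) and (b) of \refEq{eq:sucession}.

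\textbf{Step 1: explicit morphisms.} For $m\geq0$ I write candidate maps as $q$-deformed divided powers of lowering and raising operators:
\begin{equation*}
F^{(m)}=\sum_{1\le j_1<\dots<j_m\le N} c_{j_1\dots j_m}\,\sigma^-_{j_1}\cdots\sigma^-_{j_m},
\end{equation*}
and analogously $E^{(m)}$ with $\sigma^+$. The coefficients $c_{j_1\dots j_m}$ are products of factors $q^{\pm\sigma^z_k/2}$ for $k<j_r$, as in the coproduct of $U_q(\mathfrak{sl}_2)$, times a twist-dependent weight.

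\textbf{Step 2: bulk generators.} For $1\le i\le N-1$ the commutation $F^{(m)}e_i^\pm=e_i^\pm F^{(m)}$ is local. It reduces to the two-site statement that $e_i$ is proportional to the $U_q$-singlet projector on sites $i,i+1$, which commutes with the coproduct $\Delta(F)^{(m)}$. This is a routine check on the four-dimensional space of two sites.

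\textbf{Step 3: boundary generators.} The real content is commutation with $\Omega_N=t\,w^{-\sigma_1^z}$ and hence with $e_N=\Omega_N e_1\Omega_N^{-1}$. Moving site $1$ to the end via $t$ replaces its coproduct weight $q^{\pm\sigma^z/2}$ acting on the remaining sites by the inverse weight. This produces a scalar factor $q^{\pm(\text{magnetization of the rest})}$, which must be absorbed by the change of twist $w\mapsto v$. Matching this factor on the sector $d$, with $m$ spins removed, gives exactly $w^2=q^{t}$ and $v=wq^{-m}$, or the mirror conditions $w^2=q^{-t}$ and $v=wq^{m}$, depending on which coproduct ordering is used. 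I would tune the weights so that these are the conditions in \refEq{eq:sucession}, thereby showing that $F^\mp$ and $E^\pm$ are aTL$_N$-linear.

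\textbf{Step 4: exactness.} For the composition, $F^{(m')}F^{(m)}=\left[\begin{smallmatrix}m+m'\\m\end{smallmatrix}\right]_q F^{(m+m')}$. Along a chain of direct successors the relevant $q$-binomial vanishes, since $q^2$ is a primitive $\ell$-th root. Hence $\operatorname{im}\subseteq\ker$.

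The hypothesis $\frac12(s-d)\not\equiv0 \bmod \ell$ should be exactly what prevents the kernel from being larger than this image, so the reverse inclusion is the main obstacle. I would attack it by dimension counting. First, identify $\mathcal{H}_{N;d,w}$ with the dual of a standard aTL module, via the Graham--Lehrer and Martin--Saleur results, so that it has a known Loewy structure, with radical and head given by the successor chain. Second, check that $\operatorname{rank}F+\operatorname{rank}F=\dim\mathcal{H}_{N;d,w}$ from the binomial dimensions $\binom{N}{(N+d)/2}$.

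If this structural route becomes too heavy, the fallback is a direct argument. I would show that $F^{(m)}$ is injective on a complement of the image by using $E^{(m)}F^{(m)}$, which by $q$-commutation equals a nonzero $q$-number times the identity plus terms that vanish on the image. That $q$-number is nonzero exactly when the stated congruence condition holds.
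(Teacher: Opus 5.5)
Your Steps 1--3 follow the standard construction, and your inclusion $\operatorname{im}\subseteq\ker$ is correct. The paper gives no proof of this statement: it imports it from Pinet--Saint-Aubin, and its appendix identifies the maps as Lusztig's divided powers, as in your Step 1. The two twist conditions combine to $q^{2(m+m')}=1$. For direct successors with $m'\not\equiv 0 \bmod \ell$ this gives $0<m,m'<\ell$, $m+m'=\ell$, and hence $\left[\begin{smallmatrix}\ell\\ m\end{smallmatrix}\right]_q=0$.

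That is all the congruence hypothesis does: it makes the sequence a complex, but it does not control its homology, contrary to what you expect. The reverse inclusion, which you leave open, cannot be proved from the stated hypotheses, because it is false in this generality. The divided powers are twist-independent operators on $(\mathbb{C}^2)^{\otimes N}$, so exactness concerns these matrices alone, and dimensions already rule it out. Take $N=\ell=3$ and $q=e^{2\pi\mathrm{i}/3}$. The triple $(3,q)$, $(1,1)$, $(-3,q)$ consists of direct successors through condition (b), with $m=1$, $m'=2$ and $\tfrac12(s+d)=-1\not\equiv 0 \bmod 3$. Yet $\mathcal{H}^+_{3;3,q}\xrightarrow{F^{(1)}}\mathcal{H}^+_{3;1,1}\xrightarrow{F^{(2)}}\mathcal{H}^+_{3;-3,q}$ has dimensions $1,3,1$ and cannot be exact in the middle.

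So your rank check would simply fail. Knowing the Loewy structures of the three aTL modules would not help either, since that does not fix the ranks of these particular maps. Your fallback is also not right. $E^{(m)}F^{(m)}$ acts as a scalar only on vectors annihilated by the $E^{(j)}$. In general the commutation formula leaves terms $F^{(m-k)}[\,\cdot\,]E^{(m-k)}$ that do not vanish on $\operatorname{im}F$. The scalar depends on the weight, not on $\tfrac12(s-d)$. And injectivity of $F^{(m)}$ off its image is not what exactness requires.

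The missing idea is to view $(\mathbb{C}^2)^{\otimes N}$ as a module over Lusztig's $U_q(\mathfrak{sl}_2)$ and decompose it into tilting modules.
Each $T(\lambda)$ with $\lambda\ge\ell-1$ is a summand of $T(\ell-1)\otimes T(\lambda-\ell+1)$, hence free over $\mathbb{C}[F]/(F^\ell)$ with a weight-homogeneous basis. On an $F$-string of length $\ell$, the complex $F^{(m)},F^{(\ell-m)}$ is exact.
Each fundamental-alcove summand $T(\lambda)=L(\lambda)$ with $\lambda\le\ell-2$ is a single string of length $\lambda+1<\ell$. It produces homology at weights $|d|\le\ell-2$.
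This yields exactness at every middle weight with $|d|\ge\ell-1$, and shows that an extra hypothesis of this kind is unavoidable. The defect is not a small-$N$ artifact. For $N=9$, $\ell=3$, the single copy of $L(1)$ in $(\mathbb{C}^2)^{\otimes 9}$ gives one-dimensional homology at $d=1$ in $(9;3,q)\to(9;1,1)\to(9;-3,q)$.

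Three smaller corrections to the morphism part:
\begin{enumerate}
\item In Step 3 there is nothing to tune. The bulk commutation already fixes the coefficients, so the succession conditions must come out of the $\Omega_N$ computation. For $m=1$, the terms not flipping site $1$ fix $v/w$, and the term flipping site $1$ fixes $w^2=q^{\pm t}$.
\item At roots of unity, commuting with $F$ does not imply commuting with $F^{(m)}$ for $m\ge\ell$. Step 2 therefore needs a specialization argument from generic $q$.
\item The spin flip exchanges $F^{\mp}$ with $E^{\pm}$ but preserves the label (a)/(b); it does not swap (a) and (b).
\end{enumerate}
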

The FM strings are a realization of these intertwiners, as remarked by Ref~\cite{Pinet_2022}. 
Here, one can understand $F$'s to be the ``spin-lowering'' strings that connect sector $d$ with sector $d-2\ell$ in the FM construction, while $E$'s are the ``spin-raising'' strings that connect sector $d$ with sector $d+2\ell$. We give explicit expressions for the morphisms which are actions of Lusztig's divided powers~\cite{LUSZTIG1988} and demonstrate this connection in Appendix~\ref{appendix:connection_FM_aTL}. 

Next, we use Theorem~\ref{thm:morphism} to derive the structure of the degenerate subspace at the product state energy. 
First, we prove the following lemma regarding the spectrum of $H_\text{XXZ}$ that enable us to transport eigenvalues between different sectors. The detailed proof is shown in Appendix~\ref{sec:proof}. 
\begin{lemma}[Transport of generalized eigenvalues]
    If an aTL$_n$-linear morphism $\mu:\mathcal{H}_{N;t,v} \to \mathcal{H}_{N;d,w}$ exists, consider $H_\text{XXZ}$ as an element of the aTL$_N$ algebra. Then $\mu$ preserves energy eigenvalues: any energy eigenvector in $\mathcal{H}_{N;t,v}$ is mapped either to zero or to an energy eigenvector in $\mathcal{H}_{N;d,w}$ with the same energy. 
    \label{lem:transport}
\end{lemma}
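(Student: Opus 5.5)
The plan is to use only two facts: $\mu$ is aTL$_N$-linear, and $H_\text{XXZ}$ is an element of aTL$_N$. As such, $H_\text{XXZ}$ is a fixed polynomial in the generators $e_i$ (namely $\sum_{i=1}^N e_i$). In both the source module $\mathcal{H}_{N;t,v}$ and the target $\mathcal{H}_{N;d,w}$, this element is represented by the operators $e_i^\pm$ of \refEq{eq:atl_epm}, with the twist $v$ or $w$ entering through $e_N^\pm$.

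The first step is to verify that the abstract element $\sum_{i} e_i$ acts on each module as the twisted XXZ Hamiltonian of that sector. On a periodic sector ($w=\pm1$, so $w^2=1$) it is exactly $H_\text{XXZ}$. On a hidden twisted sector it is the twisted chain Hamiltonian with boundary condition \refEq{eq:twisted_bc}. The choice of sign $\pm$ does not matter, since the terms $-q^{\pm1}\sigma_i^+\sigma_i^- - q^{\mp1}\sigma_{i+1}^+\sigma_{i+1}^-$ sum around the ring to the same diagonal operator. Denote these representing operators by $\rho_{t,v}(H)$ and $\rho_{d,w}(H)$.

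Next comes the intertwining relation. By aTL$_N$-linearity, $\mu\,\rho_{t,v}(a)=\rho_{d,w}(a)\,\mu$ for every generator $a$, and hence for every element of the algebra. In particular $\mu\,\rho_{t,v}(H)=\rho_{d,w}(H)\,\mu$. Now let $\rho_{t,v}(H)\psi=E\psi$. Then $\rho_{d,w}(H)(\mu\psi)=\mu\,\rho_{t,v}(H)\psi=E\,\mu\psi$, so $\mu\psi$ is either $0$ or an eigenvector with eigenvalue $E$.

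The same argument extends to generalized eigenvectors, which justifies the lemma's title. Since $\mu\,(\rho_{t,v}(H)-E)^k=(\rho_{d,w}(H)-E)^k\,\mu$, the map $\mu$ sends the generalized eigenspace of $E$ into the generalized eigenspace of $E$. The same reasoning applies to polynomials in the other aTL elements, for example translations built from $\Omega_N$.

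The main point needing care is the first step: $H$ must be a bona fide algebra element whose image in every module is the physical (possibly twisted) Hamiltonian. In particular, the energy $\varepsilon=\Delta N/4$ must be the same constant in every sector and must not be shifted by a module-dependent scalar. I would check this directly from \refEq{eq:atl_epm}, tracking any identity multiples in the normalization. Such multiples are identical across modules because the identity $\mathbf{1}$ acts as the identity everywhere, so any shift is common to all sectors and cancels.
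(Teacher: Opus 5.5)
Your proposal is correct and follows the paper's proof: both rest on the intertwining relation $\mu\,\pi_1(a)=\pi_2(a)\,\mu$ applied to $a=H_\text{XXZ}\in\mathrm{aTL}_N$, so $\pi_1(H_\text{XXZ})v=\lambda v$ implies $\pi_2(H_\text{XXZ})\mu v=\lambda\,\mu v$. Your additions are sound refinements the paper does not spell out: the extension to generalized eigenvectors via $\mu(\rho_{t,v}(H)-E)^k=(\rho_{d,w}(H)-E)^k\mu$, and the check that the normalization is the same in every module, which matters because in fact $\sum_i e_i^\pm = 2(H_\text{XXZ}-\Delta N/4)$ in every (twisted) sector rather than $H_\text{XXZ}$ itself.
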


Through the intertwiner relationships in Theorem \ref{thm:morphism}, we can transport eigenvalues between different sectors even if the representation of $H_\text{XXZ}$ differ and the twists of the two chains are different---thus making the Hamiltonian take different forms on the two chains. As long as the sectors are connected via \refEq{eq:sucession}, they possess eigenstates with the same energy. 

Note that for the fully polarized states, the energy is independent of the twist parameter $w$. We can therefore start with the fully polarized states with any twist and construct sequences of sectors containing states with the desired product state energy. Consequently, each sequence of intertwiner-connected sectors determines a set of degenerate states across those sectors.

Let us look at $N=9$, $\Delta=-1/2$, $q=e^{\frac{2}{3}\pi i}$, $\ell=3$ as a comprehensive example and construct the sequences of sectors connected by intertwiners $F^\pm$. We denote each Hilbert space $\mathcal{H}_{N;t,v}^\pm$ as $(N;t,v)^\pm$ for ease of reading. Fig.~\ref{fig:diagram_oneline} shows the sequence of morphisms that starts with the fully polarized sector with zero twist. This is an exact sequence by Theorem~\ref{thm:morphism}, and the state in the fully polarized sector $(9;9,1)^+$ is a highest weight vector in the $\mathfrak{sl}_2$ algebra \cite{deguchi_xxz_2007}. This sequence of morphisms induces a degeneracy of $2^{N/\ell}=8$ from the fully polarized sector. This is analogous to the tower constructed and proven in Ref.~\cite{deguchi_xxz_2007} for sectors where $\ell\mid d$. 

Fig.~\ref{fig:diagram_twolines} shows the sequences of morphisms that start with fully polarized sectors with non-zero twists that are powers of $q$. We can observe that compared to Fig.~\ref{fig:diagram_oneline}, there now exists hidden sectors with non-zero twists and $\ell\mid d$ (labeled in red) that connect and mediate the degeneracy in the periodic sectors where $\ell\nmid d$ (labeled in black). 

Fig.~\ref{fig:diagram_twolines_b} visualizes the same intertwiner structure as Fig.~\ref{fig:diagram_twolines} in the tower representation, revealing how the right two descendant towers from Fig.~\ref{fig:descendant_tower_a} are in fact interconnected through hidden sectors with nonzero twists. The blue solid arrows represent the $F^-$ intertwiners (corresponding to horizontal arrows in Fig.~\ref{fig:diagram_twolines}) that connect sectors within the $\mathcal{H}^-$ representation spaces, while the cyan dotted arrows represent the $F^+$ intertwiners (corresponding to diagonal arrows in Fig.~\ref{fig:diagram_twolines}) that connect sectors within the $\mathcal{H}^+$ representation spaces. The tower visualization with the hidden sectors make the spin-flip symmetry manifest, and shows explicitly that what appears as two separate descendant towers in the FM string picture is actually a single interconnected structure unified by the representation theory of the aTL algebra.  
\begin{figure*}[htbp!]
    \centering
    \begin{subfigure}{\textwidth}
            \caption{}
        \vspace*{-2\baselineskip}
        \centering
        \begin{tikzcd}
          (9;9,1)^+
            \arrow[r]
          &
          (9;3,1)^+
            \arrow[r]
          &
          (9;-3,1)^+
            \arrow[r]
          &
          (9;-9,1)^+
        \end{tikzcd}
        \label{fig:diagram_oneline}
    \end{subfigure}
    \\[0.5em]
    \begin{subfigure}{\textwidth}
            \caption{}
        \centering
        \begin{tikzcd}
          |[draw=cbRed]| \color{cbRed}{(9;9,q)^\mp}
            \arrow[dr]
            \arrow[r]
          &
          (9;5,1)^\mp
            \arrow[dr]
             \arrow[r]
          &
          |[draw=cbRed]| \color{cbRed}{(9;3,q)^\mp}
            \arrow[dr]
             \arrow[r]
          &
          (9;1,1)^\mp
            \arrow[dr]
             \arrow[r]
          &
          |[draw=cbRed]| \color{cbRed}{(9;-3,q)^\mp}
            \arrow[dr]
             \arrow[r]
          &
          (9;-7,1)^\mp
            \arrow[dr]
             \arrow[r]
          &
          |[draw=cbRed]| \color{cbRed}{(9;-9,q)^\mp}
          \\
          |[draw=cbRed]| \color{cbRed}{(9;9,-q)^\mp}
            \arrow[ur]
            \arrow[r]
          &
          (9;7,1)^\mp
          \arrow{ur}
          \arrow[r]
          &
          |[draw=cbRed]| \color{cbRed}{(9;3,-q)^\mp}
          \arrow{ur}
          \arrow[r]
          &
          (9;1,1)^\mp
          \arrow{ur}
          \arrow[r]
          &
          |[draw=cbRed]| \color{cbRed}{(9;-3,-q)^\mp}
          \arrow{ur}
          \arrow[r]
          &
          (9;-5,1)^\mp
          \arrow{ur}
          \arrow[r]
          &
          |[draw=cbRed]| \color{cbRed}{(9;-9,-q)^\mp}
        \end{tikzcd}
        \label{fig:diagram_twolines}
    \end{subfigure}
    \\[0.5em]
    \begin{subfigure}{\textwidth}
            \caption{}
        \vspace*{-1\baselineskip}
        \centering
        \begin{tikzpicture}[scale=0.9]
          \tikzset{every picture/.style={line width=0.75pt}}
          \tikzset{>=stealth}
          \usetikzlibrary{fit, decorations.pathmorphing, arrows.meta}
          
            \node (alpha)[draw, minimum height=1cm, align=center] at (0,-0.8712) {$\{\pm\infty\}$ \\ \scriptsize 2 states};
            \node (alpham)[draw, minimum height=1cm, align=center] at (0,-3.5112) {$\{\pm\infty,\alpha_4^{FM}\}\quad\{\pm\infty,\alpha_5^{FM}\}$ \\ \scriptsize 4 states};
            \node (alpha3)[draw, minimum height=1cm, align=center] at (0,-6.1512) {$\{\pm\infty,\alpha_4^{FM},\alpha_5^{FM}\}$ \\ \scriptsize 2 states};

            \draw[dashed, thick] (alpha) -- (alpham);
            \draw[dashed, thick] (alpham) -- (alpha3);
            
            \node (h0) [twistColor] at (4, 0.8712) {twist $q$};
            \node (h1) [statesNode] at (4,0) {$1$ state};
            \node (h2) [statesNode, minimum width=1.5cm] at (4,-2.64) {$3$ states};
            \node (h3) [statesNode, minimum width=1.5cm] at (4,-5.28) {$3$ states};
            \node (h4) [statesNode] at (4,-7.92) {$1$ state};
            
            \draw[connection] (h1) -- (h2); 
            \draw[connection] (h2) -- (h3); 
            \draw[connection] (h3) -- (h4);
            
            \node (g0) [twistColor] at (7, 0.8712) {twist $-q$};
            \node (g1) [statesNode] at (7,0) {$1$ state};
            \node (g2) [statesNode, minimum width=1.5cm] at (7,-2.64) {$3$ states};
            \node (g3) [statesNode, minimum width=1.5cm] at (7,-5.28) {$3$ states};
            \node (g4) [statesNode] at (7,-7.92) {$1$ state};
            
            \draw[connection] (g1) -- (g2); 
            \draw[connection] (g2) -- (g3); 
            \draw[connection] (g3) -- (g4);
            
            \node (beta)[draw, minimum height=1cm, align=center] at (11,-1.7556) {$\{\mp \infty,\mp \infty\}$ \\ \scriptsize 2 states};
            \node (betam)[draw, minimum height=1cm, align=center] at (11,-4.3956) {$\{\mp \infty,\mp \infty, \alpha_4^{FM}\}\quad\{\mp \infty,\mp \infty, \alpha_5^{FM}\}$ \\ \scriptsize 4 states};
            \node (beta3)[draw, minimum height=1cm, align=center] at (11,-7.0356) {$\{\mp \infty,\mp \infty,\alpha_4^{FM},\alpha_5^{FM}\}$ \\ \scriptsize 2 states};
          
          \draw[dashed, thick] (beta) -- (betam);
          \draw[dashed, thick] (betam) -- (beta3);

            \node at (-4, 0) {$M=0$};
            \node at (-4, -0.8712) {$M=1$};
            \node at (-4, -1.7556) {$M=2$};
            \node at (-4, -2.64) {$M=3$};
            \node at (-4, -3.5112) {$M=4$};
            \node at (-4, -4.3956) {$M=5$};
            \node at (-4, -5.28) {$M=6$};
            \node at (-4, -6.1512) {$M=7$};
            \node at (-4, -7.0356) {$M=8$};
            \node at (-4, -7.92) {$M=9$};

          \draw[dashed] (-5, -3.96) -- (14, -3.96);

          \draw[connectionSame] (alpha.north) to (h1.west);
          \draw[connectionSame] (h2.west) to (alpha.south);
          \draw[connectionSame] (alpham.north) to (h2.west);
          \draw[connectionSame] (h3.west) to (alpham.south);
          \draw[connectionSame] (alpha3.north) to (h3.west);
          \draw[connectionSame] (h4.west) to (alpha3.south);

          \draw[connectionSame] (beta.north) to (g1.east);
          \draw[connectionSame] (g2.east) to (beta.south);
          \draw[connectionSame] (betam.north) to (g2.east);
          \draw[connectionSame] (g3.east) to (betam.south);
          \draw[connectionSame] (beta3.north) to (g3.east);
          \draw[connectionSame] (g4.east) to (beta3.south);

          \draw[connectionDifferent] (alpha.east) to (g1.south);
          \draw[connectionDifferent] (g2.north) to (alpha.east);
          \draw[connectionDifferent] (alpham.east) to (g2.south);
          \draw[connectionDifferent] (g3.north) to (alpham.east);
          \draw[connectionDifferent] (alpha3.east) to (g3.south);
          \draw[connectionDifferent] (g4.north) to (alpha3.east);

          \draw[connectionDifferent] (beta.west) to (h1.south);
          \draw[connectionDifferent] (h2.north) to (beta.west);
          \draw[connectionDifferent] (betam.west) to (h2.south);
          \draw[connectionDifferent] (h3.north) to (betam.west);
          \draw[connectionDifferent] (beta3.west) to (h3.south);
          \draw[connectionDifferent] (h4.north) to (beta3.west);
          
\end{tikzpicture}
        \label{fig:diagram_twolines_b}
    \end{subfigure}
    
    \caption{aTL intertwiners reveal how sectors with different $\stot$ and twists are connected by intertwiners $F^\pm$ starting from the fully polarized sectors with integer power twists of $q$, relating to the FM construction in \refFig{fig:descendant_tower} with $N=9$, $q=e^{\frac{2}{3}\pi i}$, $\ell=3$. %
    For brevity, we denote $\mathcal{H}_{N;t,v}^\pm \equiv (N;t,v)^\pm$. Reversing the $F^\pm$ arrows yields $E^\mp$. (\subref{fig:diagram_oneline}) The exact sequence of sectors where $\ell\mid d$, connected by $F^+$. The state in the fully polarized sector $(9;9,1)^+$ is a highest weight vector in the $L(\mathfrak{sl}_2)$ algebra \cite{deguchi_xxz_2007}, inducing a degeneracy of $2^{N/\ell}=8$.  This corresponds to the left tower in Fig.~\ref{fig:descendant_tower_a}. (\subref{fig:diagram_twolines}) Sequences of sectors that start with non-zero twists. Horizontal arrows indicate the $F^-$ intertwiners between $\mathcal H^-$ spaces, and diagonal arrows indicate the $F^+$ intertwiners between $\mathcal H^+$ spaces. The boxed red sectors highlight the ``hidden'' sectors with non-zero twists and $\ell\mid d$ which explain the total degeneracy of the zero twist sectors (black). For each horizontal sequence, the red sectors have a total degeneracy of $2^{N/\ell}=8$, 
    resulting in a total degeneracy across all periodic sectors of $16$. 
    (\subref{fig:diagram_twolines_b}) The aTL algebra perspective reveals that the right two descendant towers in Fig.~\ref{fig:descendant_tower_a} are interconnected via hidden towers with nonzero twists $q$ and twist $-q$ (red). Blue solid arrows represent $F^-$ intertwiners (horizontal arrows in~\subref{fig:diagram_twolines}); cyan dotted arrows represent $F^+$ intertwiners (diagonal arrows in~\subref{fig:diagram_twolines}). The hidden towers account for the degeneracies of $8$ per periodic tower. With the hidden twisted sectors included, the spin-flip symmetry become manifest: the four-tower structure is symmetric about the equator (dashed line at $M=4.5$), with each sector at $M$ having a mirror partner at $9-M$ connected by the same intertwiner pattern.}
    \label{fig:diagram_combined}
\end{figure*}
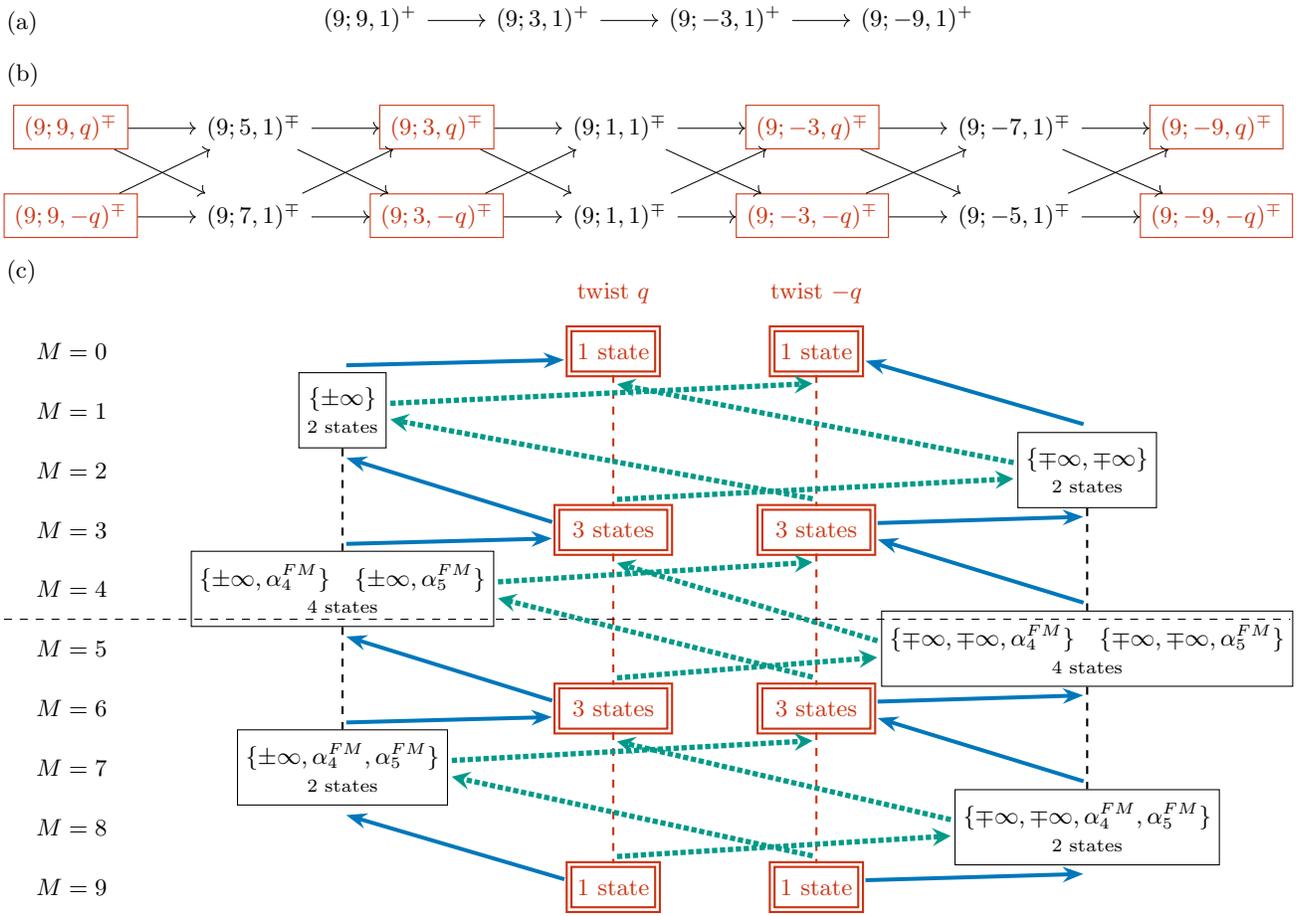

In general, since $q^2$ is an $\ell^{\text{th}}$ primitive root of unity, we have $q^\ell=\pm1$. 
If $q^\ell=1$, $\ell$ is odd (else we can find a smaller integer $\ell/2$ that satisfies the root-of-unity condition of $q^{2\ell}=1$), and all integer powers of $q$ can be expressed as $q^{\pm r}$ with $0\leq r< \ell/2$. The intertwiner sequences (Fig.~\ref{fig:diagram_oddl}) generalize Figs.~\ref{fig:diagram_combined} and take two forms: (i) One sequence starting from the fully polarized sector $(N;N,1)^+$ connects only periodic boundary sectors (Fig.~\ref{fig:diagram_oddl_a}); (ii) For each $1\leq r<\ell/2$, two connected sequences starting from $(N;N,q^{\pm r})^\mp$ connect periodic sectors through hidden twisted boundary sectors (red in Fig.~\ref{fig:diagram_oddl_b}), giving $\ell-1$ such sequences. The hidden twisted sectors are crucial for mediating degeneracies in periodic sectors where $\ell \nmid d$.

\begin{figure*}[htbp!]
    \centering
    \begin{subfigure}{\textwidth}
            \caption{}
        \centering
        \vspace* {-2em}
        \begin{tikzcd}
          \cdots
            \arrow[r]
          &
          (N;k\ell,1)^+
            \arrow[r]
          &
          (N;(k-2)\ell,1)^+
            \arrow[r]
          &
          \cdots
            \arrow[r]
          &
          (N;-k\ell,1)^+
            \arrow[r]
          &
          \cdots
        \end{tikzcd}
        \label{fig:diagram_oddl_a}
    \end{subfigure}
    \\[1.5em]
    \begin{subfigure}{\textwidth}
            \caption{}
        \centering
        \begin{tikzcd}
          \cdots
            \arrow[dr]
            \arrow[r]
          &
          |[draw=cbRed]| \color{cbRed}{(N;k\ell,q^{-r})^\mp}
            \arrow[dr]
             \arrow[r]
          &
          (N;(k-2)\ell+2r,1)^\mp
            \arrow[dr]
             \arrow[r]
          &
          |[draw=cbRed]| \color{cbRed}{(N;(k-2)\ell,q^{-r})^\mp}
            \arrow[dr]
             \arrow[r]
          &
          \cdots
            \arrow[dr]
             \arrow[r]
          &
          |[draw=cbRed]| \color{cbRed}{(N;-k\ell,q^{-r})^\mp}
            \arrow[dr]
             \arrow[r]
          &
          \cdots
          \\
         \cdots
            \arrow[ur]
            \arrow[r]
          &
          |[draw=cbRed]| \color{cbRed}{(N;k\ell,q^{r})^\mp}
          \arrow{ur}
          \arrow[r]
          &
          (N;k\ell-2r,1)^\mp
          \arrow{ur}
          \arrow[r]
          &
          |[draw=cbRed]| \color{cbRed}{(N;(k-2)\ell,q^{r})^\mp}
          \arrow{ur}
          \arrow[r]
          &
          \cdots
          \arrow{ur}
          \arrow[r]
          &
          |[draw=cbRed]| \color{cbRed}{(N;-k\ell,q^{r})^\mp}
          \arrow{ur}
          \arrow[r]
          &
          \cdots
        \end{tikzcd}
        \label{fig:diagram_oddl_b}
    \end{subfigure}
    \caption{The general sectors connected by intertwiners $F^{\pm}$ for $q^\ell=1$, $0<r\leq\ell/2$, and $r\in\mathbb{N}$. For different $N$, as long as they admit these sectors, they are part of the sequences that truncate at the appropriate sectors. In general, all odd $N$ are in one set of sequences and all even $N$ are in another set of sequences for each $\ell$. (\subref{fig:diagram_oddl_a}) The sequence that start with the periodic boundary sector ($w=1$). (\subref{fig:diagram_oddl_b}) Sequences starting with non-zero twists, which include hidden half-sequences of twisted sectors (boxed in red, $w=q^{\pm r}$) that mediate connections between periodic boundary sectors.}
    \label{fig:diagram_oddl}
\end{figure*}
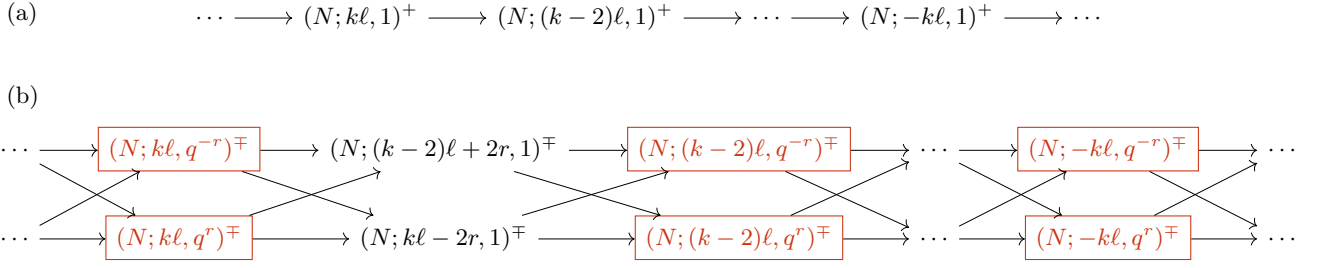

If $q^\ell=-1$, the structure differs for even versus odd $N$. For even $N$, only sectors with $d=2k\ell$ ($k\in\mathbb{Z}$) appear. The sequences (Fig.~\ref{fig:diagram_ql-1_evenN}) are similar to the $q^\ell=1$ case but with alternating twist signs. Pure periodic sequences alternate between $w=1$ and $w=-1$ (Fig.~\ref{fig:diagram_ql-1_evenN_a},~\ref{fig:diagram_ql-1_evenN_b}), while twisted-mediated sequences use hidden sectors with twists $w=\pm q^{\pm r}$ (Fig.~\ref{fig:diagram_ql-1_evenN_c}, red sectors). Blue highlighting marks sectors whose twist has flipped sign relative to Fig.~\ref{fig:diagram_oddl}.
For odd $N$, in the commensurate case where $\ell\mid N$, $\ell$ has to be odd. We can prove that the generic sector $(-(2k+1)\ell+2d,1)$ where $k,d\in\mathbb{Z}$ and $0\leq d<\ell$ does not have a successor:
\begin{lemma}[Odd $N$ and $q^{\ell}=-1$] 
    Let $q^2$ be an $\ell^{\text{th}}$ primitive root of unity. If $q^\ell=-1$, $k,d\in\mathbb{Z}$, $\ell$ even, $0\leq d<\ell$, then $(-(2k+1)\ell+2d,1)$ does not have a successor as defined by \refEq{eq:sucession}. 
    \label{lem:oddl_oddN}
\end{lemma}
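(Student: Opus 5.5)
The plan is to negate both branches of the succession condition \refEq{eq:sucession} directly, with $w=1$, and show that each forces an arithmetic condition on $t$ that parity forbids. Write $d_0=-(2k+1)\ell+2d$ for the sector label. A successor $(t,v)$ needs an integer $m\ge 0$ with $t=d_0+2m$. It must also satisfy either (a) $w^2=q^{t}$ or (b) $w^2=q^{-t}$. Since $w=1$, both branches reduce to the single requirement $q^{t}=1$; the second halves $v=wq^{\mp m}$ only fix $v$ and give no obstruction. So it suffices to show that $q^t\neq 1$ for every admissible $t$.

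\textbf{Step 1: $q^t=1$ forces $2\ell\mid t$.} Suppose $q^t=1$. Then $(q^2)^t=1$, and since $q^2$ is a primitive $\ell$-th root of unity, $\ell\mid t$. Writing $t=j\ell$ gives $q^t=(q^\ell)^j=(-1)^j$ by the hypothesis $q^\ell=-1$. This equals $1$ only if $j$ is even. Hence $2\ell\mid t$, and in particular $t$ is even.

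\textbf{Step 2: $t$ is odd.} In the setting of the lemma (odd $N$, see the sentence preceding it), every sector that occurs satisfies $t\equiv d_0\equiv N \pmod 2$, by the sector constraint in Theorem~\ref{thm:morphism}. Also $t=d_0+2m$ has the same parity as $d_0$. So $t$ is odd, which contradicts Step~1. Therefore no successor exists, neither through (a) nor through (b).

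\textbf{Main obstacle: the parity hypothesis.} The delicate point is that the parity of $d_0=-(2k+1)\ell+2d$ equals the parity of $\ell$. The argument therefore needs $d_0$ odd. This is automatic when $d_0$ labels a sector of an odd-$N$ chain, which is the intended setting: the text just before the lemma notes that $\ell$ is odd there. If the hypothesis ``$\ell$ even'' were read literally, then $d_0$ would be even and could not be a sector of an odd chain at all, so the claim would hold vacuously. Without the parity constraint, however, the claim fails as an abstract statement: one could choose $m$ so that $t$ is a multiple of $2\ell$. I would therefore state explicitly that the proof uses $t\equiv N\equiv 1 \pmod 2$ (equivalently $\ell$ odd), and note that the rest of the argument is the short order computation of Step~1.
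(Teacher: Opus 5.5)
Your proof is correct and is essentially the paper's argument. Both set $w=1$, which reduces branches (a) and (b) to $q^t=1$, deduce $2\ell\mid t$ from $q^\ell=-1$, and conclude by a parity clash; the paper phrases this clash as $k'=(s-d_0)/2=\ell(m+k)+\ell/2-d$ failing to be an integer. You are also right that the hypothesis must read ``$\ell$ odd'' (equivalently, $d_0$ odd, as for odd $N$): the paper's own proof ends with ``If $\ell$ is odd'', so the ``$\ell$ even'' in the statement is a typo.
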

\begin{proof}
    By definition, a successor of $(d_0,w_0)$ is obtained by finding the smallest integer $s>d_0$ such that 
    \begin{equation}
         q^{\pm s} = w_0^2,
    \end{equation}
    and then setting $k' = (s-d_0)/2$, $d' = d_0+2k'$, and $w' = w_0 q^{k'}$. 
    In our case we have
    \begin{equation}
          d_0 = -(2k+1)\ell + 2d,\qquad w_0=1,
    \end{equation}
    so the condition becomes $q^{s}=1$. This forces $s$ to be a multiple of $2\ell$, say $s=2\ell m$ with $m\in\mathbb{Z}$. Thus
    \begin{equation}
        k'=\frac{s-d_0}{2}=\frac{2\ell m+ (2k+1)\ell - 2d}{2} = \ell(m+k)+\frac{\ell}{2}-d.
    \end{equation}
    If $\ell$ is odd, $k'$ is not an integer. 
\end{proof}

\begin{figure*}[htbp!]
    \centering
    \begin{subfigure}{\textwidth}
            \caption{}
        \centering
        \vspace*{-2em}
        \begin{tikzcd}
          \cdots
            \arrow[r]
          &
          (N;2k\ell,1)^+
            \arrow[r]
          &
          |[draw= cbBlue, dashed]| \color{cbBlue}{(N;2(k-1)\ell,-1)^+}
            \arrow[r]
          &
          \cdots
            \arrow[r]
          &
          (N;-2k\ell,1)^+
            \arrow[r]
          &
          \cdots
        \end{tikzcd}
        \label{fig:diagram_ql-1_evenN_a}
    \end{subfigure}
    \\[1.5em]
    \begin{subfigure}{\textwidth}
            \caption{}
        \centering
        \vspace*{-2em}
        \begin{tikzcd}
          \cdots
            \arrow[r]
          &
          |[draw= cbBlue, dashed]| \color{cbBlue}{(N;2k\ell,-1)^+}
            \arrow[r]
          &
          (N;2(k-1)\ell,1)^+
            \arrow[r]
          &
          \cdots
            \arrow[r]
          &
          |[draw= cbBlue, dashed]| \color{cbBlue}{(N;-2k\ell,-1)^+}
            \arrow[r]
          &
          \cdots
        \end{tikzcd}
        \label{fig:diagram_ql-1_evenN_b}
    \end{subfigure}
    \\[1.5em]
    \begin{subfigure}{\textwidth}
            \caption{}
        \centering
        \begin{tikzcd}
          \cdots
            \arrow[dr]
            \arrow[r]
          &
          |[draw=cbRed]| \color{cbRed}{(N;2k\ell,q^{-r})^\mp}
            \arrow[dr]
             \arrow[r]
          &
          |[draw= cbBlue, dashed]| \color{cbBlue}{(N;2(k-1)\ell+2r,-1)^\mp}
            \arrow[dr]
             \arrow[r]
          &
          |[draw=cbRed]| \color{cbRed}{(N;2(k-1)\ell,-q^{-r})^\mp}
            \arrow[dr]
             \arrow[r]
          &
            |[draw=cbBlue, dashed]| \color{cbBlue}{(N;2(k-2)\ell+2r,-1)^\mp}
            \arrow[dr]
             \arrow[r]
          &
          \cdots
          \\
         \cdots
            \arrow[ur]
            \arrow[r]
          &
          |[draw=cbRed]| \color{cbRed}{(N;2k\ell,q^{r})^\mp}
          \arrow{ur}
          \arrow[r]
          &
          (N;2k\ell-2r,1)^\mp
          \arrow{ur}
          \arrow[r]
          &
          |[draw=cbRed]| \color{cbRed}{(N;2(k-1)\ell,-q^{r})^\mp}
          \arrow{ur}
          \arrow[r]
          &
            {(N;2(k-1)\ell-2r,1)^\mp}
          \arrow{ur}
          \arrow[r]
          &
          \cdots
        \end{tikzcd}
        \label{fig:diagram_ql-1_evenN_c}
    \end{subfigure}
    \caption{The general sectors connected by intertwiners $F^{\pm}$ for $q^\ell=-1$, even $N$, $0<r\leq\ell/2$, and $r\in\mathbb{N}$. Solid red boxes indicate hidden twisted boundary sectors ($w=\pm q^{\pm r}$). Dashed blue boxes indicate sectors whose twists acquire a factor of $-1$ compared to the $q^\ell=1$ case (Fig.~\ref{fig:diagram_oddl}). (\subref{fig:diagram_ql-1_evenN_a}) The sequence that starts with periodic boundary sector ($w=1$). Some sectors with $w=-1$ appear. (\subref{fig:diagram_ql-1_evenN_b}) The sequence that starts with anti-periodic boundary sector ($w=-1$). This parallels~\subref{fig:diagram_ql-1_evenN_a} with all twists multiplied by $-1$. (\subref{fig:diagram_ql-1_evenN_c}) Sequences starting with non-zero twists that include hidden half-sequences of twisted sectors ($w=q^{\pm r}$).}
    \label{fig:diagram_ql-1_evenN}
\end{figure*}
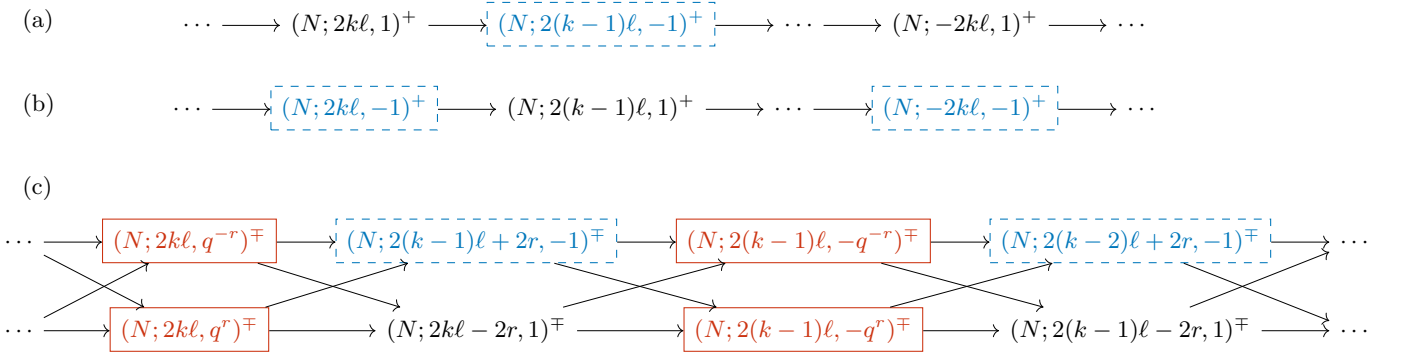

\subsection{Dimension of the Eigenspaces}
As a main result, we derive the dimension of the images of the morphism restricted to the eigenspaces of $H_\text{XXZ}$ at the product state energy eigenvalue. 

When the twist takes the discrete values $\omega = e^{i\phi} = q^{2p}$ for $p \in \tfrac{1}{2}\mathbb{Z}$, an enlarged symmetry emerges in magnetization sectors where $\ell \mid d$ \cite{deguchi_xxz_2007, deguchi2002ConstructionMissingEigenvectors}. For these sectors, the transfer matrix and XXZ Hamiltonian commute with a family of nonlocal operators that generate the Borel subalgebra of the $\mathfrak{sl}_2$ loop algebra $L(\mathfrak{sl}_2)$~\cite{Deguchi03,Korff03,deguchi_xxz_2007}.

The key structural result is that every regular Bethe ansatz eigenvector, meaning one constructed from finite, distinct Bethe roots (in our case the fully polarized states), is a highest-weight vector of $L(\mathfrak{sl}_2)$ in these sectors \cite{deguchi_xxz_2007}. Highest-weight representations of the loop algebra are characterized by a Drinfeld polynomial \cite{Deguchi03,deguchi_xxz_2007}
\begin{equation}
    P(u) = \prod_{j=1}^{r}(u - a_j)
\end{equation}
of degree $r$, where the $a_j$ are called evaluation parameters. When these evaluation parameters are nonzero and pairwise distinct, the corresponding highest-weight representation is irreducible with dimension $2^r$ \cite{deguchi_xxz_2007}.

Consequently, the symmetry multiplet built on a highest-weight Bethe state has multiplicity $2^r$, where $r$ is the degree of its Drinfeld polynomial. For the fully polarized state $\ket{\uparrow\cdots\uparrow}$ in an $N$-site chain with $\ell \mid N$ and discrete twist $q^{2p}$, the Drinfeld polynomial has degree $r = N/\ell$, yielding a degeneracy of $2^{N/\ell}$ \cite{deguchi_xxz_2007,miao_q_2021,deguchi2002ConstructionMissingEigenvectors}. 
In the case of Fig.~\ref{fig:diagram_oddl_b},~\ref{fig:diagram_ql-1_evenN_c}, note that the horizontal sequences are exact by Theorem~\ref{thm:morphism}. We use that exactness is preserved upon restriction to a fixed $H$-eigenspace. The detailed proof is in Appendix~\ref{sec:proof}. 

\begin{lemma}[Restriction of exact sequences]\label{lem:restricted-exact}
Let
\begin{equation}
0 \longrightarrow V_0 \xrightarrow{d_0} V_1 \xrightarrow{d_1} V_2 \xrightarrow{d_2}
\cdots \xrightarrow{d_{n-1}} V_n \longrightarrow 0
\end{equation}
be an exact sequence of finite-dimensional complex vector spaces. Suppose a group $G$ acts on each $V_i$ and all maps $d_i$ are $G$-linear intertwiners. Fix $H\in G$ and denote by $\rho_{V_i}(H)$ the representation of $H$ on $V_i$. Assume each $\rho_{V_i}(H)$ is diagonalizable. For a fixed eigenvalue $h\in\mathbb{C}$ write $V_{i,h}\equiv\ker\big(\rho_{V_i}(H)-hI\big)$
for the $h$-eigenspace in $V_i$, and let $d_{i,h}:V_{i,h}\to V_{i+1,h}$ be the restriction of $d_i$.
Then the restricted sequence
\begin{equation}
0 \longrightarrow V_{0,h} \xrightarrow{d_{0,h}} V_{1,h} \xrightarrow{d_{1,h}}
\cdots \xrightarrow{d_{n-1,h}} V_{n,h} \longrightarrow 0
\end{equation}
is exact.
\label{lemma:restriction}
\end{lemma}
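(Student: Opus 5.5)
The plan is to use diagonalizability to split each $V_i$ into eigenspaces of $H$, to check that each map $d_i$ respects this splitting, and then to read off exactness one eigenvalue at a time.

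First, since $\rho_{V_i}(H)$ is diagonalizable, each $V_i$ decomposes as a direct sum
\begin{equation}
V_i=\bigoplus_{h\in\mathbb{C}} V_{i,h},
\end{equation}
with only finitely many nonzero summands. Since each $d_i$ is $G$-linear, we have $d_i\rho_{V_i}(H)=\rho_{V_{i+1}}(H)d_i$. Hence $d_i$ maps $V_{i,h}$ into $V_{i+1,h}$, so the restrictions $d_{i,h}$ are well defined. This is the same observation as in Lemma~\ref{lem:transport}. It also gives $d_i=\bigoplus_h d_{i,h}$ as a block-diagonal map with respect to these decompositions, and so
\begin{equation}
\ker d_i=\bigoplus_h \ker d_{i,h},\qquad \operatorname{im} d_{i-1}=\bigoplus_h \operatorname{im} d_{i-1,h}.
\end{equation}

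Next, fix $h$ and verify exactness at $V_{i,h}$. The inclusion $\operatorname{im} d_{i-1,h}\subseteq\ker d_{i,h}$ is immediate, because $d_{i}d_{i-1}=0$.

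For the reverse inclusion, take $x\in\ker d_{i,h}$. Then $x\in\ker d_i=\operatorname{im} d_{i-1}$, so $x=d_{i-1}y$ for some $y\in V_{i-1}$. Decompose $y=\sum_{h'} y_{h'}$ with $y_{h'}\in V_{i-1,h'}$. Each $d_{i-1}y_{h'}$ lies in $V_{i,h'}$, and the sum over $h'$ is direct. Comparing $h$-components therefore gives $x=d_{i-1}y_h$, so $x\in\operatorname{im} d_{i-1,h}$.

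An equivalent and cleaner route is to apply the spectral projector $P_h$, a polynomial in $\rho(H)$, which commutes with the $d_i$ by the intertwining relation. Then $x=P_hx=d_{i-1}P_hy$.

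The endpoints need no separate argument. Injectivity of $d_{0,h}$ holds because it is the restriction of an injective map. Surjectivity of $d_{n-1,h}$ is the $i=n$ instance of the argument above, with $\ker(V_n\to0)=V_n$.

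I do not expect any real obstacle here. The only point needing care is that diagonalizability is what makes the $V_{i,h}$ exhaust $V_i$, so that the component comparison is legitimate. Without it, one would have to work with generalized eigenspaces, and the same argument would go through with $P_h$ taken to be the projector onto the generalized eigenspace.
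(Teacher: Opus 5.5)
Your proof is correct and takes essentially the same route as the paper. The paper also lifts $x=d_{i-1}y$ to $x=d_{i-1}(P_{i-1,h}y)$ using that the spectral projectors intertwine the $d_i$, and it handles the endpoints the same way. Your component-comparison version has one small advantage: it makes explicit why $d_iP_{i,h}=P_{i+1,h}d_i$ holds, namely that each $d_i$ preserves every eigenspace. This sidesteps the minor point that a ``polynomial in $\rho(H)$'' must be chosen uniformly over the spectra of all the $V_i$.
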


Using the fact that exact sequences have vanishing Euler characteristics \cite{Weibel_1994,nakahara2003geometry}, we can arrive at the following corollary:
\begin{corollary}[Dimension of restricted exact sequences]
Under the conditions of Lemma \ref{lem:restricted-exact}, for each eigenvalue $h$, the alternating sum of dimensions vanishes:
\begin{equation}
\sum_{j=0}^n (-1)^j \dim V_{j,h} = 0.
\end{equation}
\label{cor:alternate}
\end{corollary}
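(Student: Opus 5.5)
The plan is to read Corollary~\ref{cor:alternate} as the vanishing of the Euler characteristic, applied to the restricted sequence produced by Lemma~\ref{lem:restricted-exact}. By that lemma, for the fixed eigenvalue $h$ the sequence
\begin{equation}
0 \longrightarrow V_{0,h} \xrightarrow{d_{0,h}} V_{1,h} \xrightarrow{d_{1,h}} \cdots \xrightarrow{d_{n-1,h}} V_{n,h} \longrightarrow 0
\end{equation}
is exact, and each $V_{j,h}$ is finite dimensional because it is a subspace of the finite-dimensional $V_j$. The corollary is then a pure linear-algebra statement about such a sequence, and no further property of $H$ or of the aTL action is needed.

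For each $j$, I will apply rank--nullity to $d_{j,h}$:
\begin{equation}
\dim V_{j,h} = \dim\ker d_{j,h} + \dim\operatorname{im} d_{j,h}.
\end{equation}
Exactness at $V_{j,h}$ gives $\ker d_{j,h} = \operatorname{im} d_{j-1,h}$. At the ends, the conventions $d_{-1,h}=0$ (the map from $0$) and $d_{n,h}=0$ (the map to $0$) give $\ker d_{0,h}=0$ and $\operatorname{im} d_{n,h}=0$. Writing $r_j=\dim\operatorname{im} d_{j,h}$, with $r_{-1}=r_n=0$, this becomes
\begin{equation}
\dim V_{j,h}=r_{j-1}+r_j .
\end{equation}
The alternating sum $\sum_{j=0}^n(-1)^j(r_{j-1}+r_j)$ then telescopes to $(-1)^0 r_{-1}+(-1)^n r_n=0$, which is the claim.

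I do not expect a genuine obstacle. The only care needed is in fixing the boundary conventions at the two zero ends, so that injectivity of $d_{0,h}$ and surjectivity of $d_{n-1,h}$ are correctly encoded as $r_{-1}=0$ and $r_n=0$. One should also note that finite-dimensionality is essential here, since it is what makes the dimensions and the telescoping sum well defined.
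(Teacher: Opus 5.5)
Your proposal is correct and follows essentially the same route as the paper. The paper applies the vanishing Euler characteristic of an exact sequence of finite-dimensional spaces to the restricted sequence from Lemma~\ref{lem:restricted-exact}, citing the fact without proof; you spell out the standard rank--nullity telescoping behind it, including the correct boundary conventions $r_{-1}=r_n=0$.
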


For $q^\ell=1$, the periodic and twisted sectors alternate in the sequences in Fig.~\ref{fig:diagram_oddl_b}. From Corollary~\ref{cor:alternate}, the degeneracy of the periodic sectors therefore equals the degeneracy of the twisted sectors where $\ell\mid d$. For the twisted sectors, we know the degeneracy because of the $L\left(\mathfrak{sl}_2\right)$ symmetry, with the highest weight vector being the fully-polarized state with degree $N/\ell$. Hence, the periodic sectors on the sequence also have a total degeneracy of $2^{N/\ell}$. We have $\ell-1$ such sequences with alternating periodic and twisted sectors. In addition, we have one sequence with only the periodic sectors (Fig.~\ref{fig:diagram_oddl_a}), with degeneracy of $2^{N/\ell}$. We thus arrive at Theorem~\ref{thm:degen}. For $q^\ell=-1$, the argument is analogous except for having to keep track of the periodic and anti-periodic sectors. 

\subsection{Degeneracies at Incommensurate Roots of Unity}

Beyond commensurate roots of unity, we observe enhanced degeneracies at product state energy at incommensurate roots of unity $q^N \neq 1$, see \refTab{tab:degeneracies}. We can employ the same intertwiner construction to explain this. For instance, if we look at $\ell=3,N=7$, the intertwiner sequences are truncated from the case of $\ell=3,N=9$ (Fig.~\ref{fig:diagram_twolines}), which we show in Fig. \ref{fig:diagam_non_n_div}. 
We can see that the hidden half-sequence of non-zero twist sectors again determines the multiplicity of the energy level at each sector: here, the red sectors' multiplicities are governed by the $L(\mathfrak{sl}_2)$ algebra. Each of the two sequences of $\mathcal{H}_{7,3,q}^\mp\to \mathcal{H}_{7,-3,q}^\mp$ and $\mathcal{H}_{7,3,-q}^\mp\to \mathcal{H}_{7,-3,-q}^\mp$ has a multiplicity of $2$ and each sector then has multiplicity of $1$. This then translate to the multiplicities of the periodic sectors, and the total multiplicity is $4$. The difference between incommensurate chains and commensurate chains is that this sequence is the only sequence containing sectors of zero twists. There is no analog to Fig. \ref{fig:diagram_oneline}, since all sectors on those chains obey $\ell\mid d$, and hence do not start with sectors where $d=N$ if $\ell\nmid N$, which is where the product state energy comes from. 

\begin{figure*}[htbp!]
    \centering
    \begin{tikzcd}
  &
  |[draw=cbRed]| \color{cbRed}{\mathcal{H}_{7,3,q}^\mp}
    \arrow[dr]
     \arrow[r]
  &
  \mathcal{H}_{7,1,1}^\mp
    \arrow[dr]
     \arrow[r]
  &
  |[draw=cbRed]| \color{cbRed}{\mathcal{H}_{7,-3,q}^\mp}
     \arrow[r]
  &
  \mathcal{H}_{7,-7,1}^\mp
  \\
  \mathcal{H}_{7,7,1}^\mp
  \arrow{ur}
  \arrow[r]
  &
  |[draw=cbRed]| \color{cbRed}{\mathcal{H}_{7,3,-q}^\mp}
  \arrow{ur}
  \arrow[r]
  &
  \mathcal{H}_{7,1,1}^\mp
  \arrow{ur}
  \arrow[r]
  &
  |[draw=cbRed]| \color{cbRed}{\mathcal{H}_{7,-3,-q}^\mp}
  \arrow{ur}
  &
\end{tikzcd}
    \caption{As an example of the incommensurate case, we present the sequences of sectors connected by intertwiners $F^{\pm}$ for $N=7$, $q=e^{\frac{2}{3}\pi i}$, $\ell=3$. The red sectors highlight the hidden sectors with non-zero twists and $\ell\mid d$ in the sequences, which explain the multiplicity of the zero twist sectors through the exact sequence construction.}
    \label{fig:diagam_non_n_div}
\end{figure*}
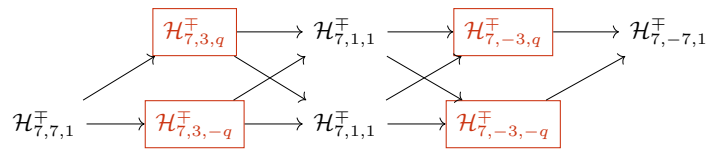

By counting the corresponding sectors on the chain with the same $L(\mathfrak{sl}_2)$ highest weight construction, we arrive at Corollary~\ref{cor:degen-thm-non}. 
The detailed combinatorics argument is spelled out in Appendix~\ref{sec:proof}, and the numerical verification is shown in \refTab{tab:degeneracies} and \refApp{sec:numerics}. 

\section{Conclusion and Outlook}
In this article, we provide a comprehensive algebraic characterization of the degenerate eigenspace at the product state energy level in the one-dimensional periodic XXZ Heisenberg chain with anisotropies corresponding to roots of unity. Our approach connects the established FM string construction to the representation theory of the aTL algebra, and reveals the hidden algebraic structure underlying the enhanced degeneracies. 
For the commensurate chains where $q^N=1$, we prove that the lower bound of the degeneracy at the product state energy grows exponentially in the chain length $N$ by $2^{N/\ell}\ell$ for anisotropies $q^{2\ell}=1$ and smallest such integers $\ell\geq 3$ by exploiting the intertwiners between aTL modules identified in Ref.~\cite{Pinet_2022}.
Further, we extend this framework to the incommensurate chains where $q^N\neq 1$, deriving explicit lower bounds on the degeneracy in Corollary~\ref{cor:degen-thm-non}. Numerical results up to $N=20$ corroborate our equations and suggests that the lower bound for the degeneracy is generally saturated.
We identify a ``hidden half-sequence'' of twisted XXZ chains that underpins the observed degeneracy structure. 
The intertwiners connect sectors with different twists and spins, allowing us to transport eigenvalues across $\stot$ values and establish the relation for the dimension of the degenerate subspace through exact sequences and the loop $\mathfrak{sl}_2$ algebra structure for the $\ell\mid d$ sectors \cite{deguchi_xxz_2007}.
This result is a substantial generalization beyond pre-existing results and demonstrates that the aTL-linear intertwiners naturally link commensurate and incommensurate chains through the same hidden sectors with twisted periodic boundary conditions. The unified treatment clarifies why enhanced degeneracies persist even when $q^N\neq1$, a phenomenon that previously lacked systematic explanations. 

Several directions for future work emerge. First, while our numerical evidence strongly supports the saturation of the lower bounds, we have not proven this hypothesis. Such a proof would likely require a deeper understanding of the composition factors of aTL modules identified at roots of unity in Ref.~\cite{Pinet_2022} to rule out the existence of accidental symmetries. Specifically, one would need to prove the following: For a 1D XXZ chain at root of unity $q$, a sector $(N;d,w)$ contains an eigenstate at product state energy only if $(N;d,w)$ appears in one of the succession sequences starting from fully polarized sectors $(N;N,q^p)$ with $p\in\mathbb Z$. 

Furthermore, product eigenstates and enhanced degeneracies also exist in higher-dimensional lattices \cite{gerken_all_2023,Changlani2018,Miao2025}, which hints at a common symmetry structure that we have yet to identify or connect to our 1D approach based on aTL modules and the Bethe Ansatz. Moreover, exploring whether similar hidden sequences and intertwiner structures exist in other integrable models may reveal broader organizational principles in quantum many-body physics. Finally, investigating the dynamical consequences of these degeneracies, particularly their roles in quantum thermalization and eigenstate thermalization hypothesis violations \cite{Essler:2016ufo,Prosen:2013woz}, represents a promising direction to connect integrable structure and out-of-equilibrium quantum dynamics. 

Our results have implications that extend beyond the specific model considered here. The product eigenstates we study are examples of quantum scars \cite{Andrade:2024iqu,Zhang:2023kxc,popkov_phantom_2021,chandran_quantum_2023,Pizzi:2024urq}, highly atypical states embedded in an otherwise thermal spectrum. Our algebraic characterization of degeneracy at product state energy provides a representation-theoretic understanding of zero-entanglement excitations in integrable systems at special parameter values. Whether analogous structures can be identified in non-integrable models is a natural question, and our framework developed here may help guide that search. The relationship between twisted and periodic boundary conditions through aTL-linear intertwiners points to a broader implication: that degeneracy structures in quantum spin chains can potentially be understood by passing to auxiliary systems with modified boundary conditions. Exploiting this idea in other integrable systems with enhanced symmetries is a natural next step.
Furthermore, the degeneracies we identify are accessible in experiments. Cold-atom and trapped-ion realizations of XXZ chains \cite{jepsen2020SpinTransportTunable,Jepsen:2021gqv} and (linear) quantum computing architectures \cite{vandyke2022PreparingExactEigenstates,jaderberg2025VariationalPreparationNormal,lutz2025AdiabaticQuantumState} provide concrete platforms to test our predictions. 
If realized in solid state systems, the strong sensitivity of the degeneracy on fluctuations in the Heisenberg anisotropy $\Delta$ could enable an application of the 1D Heisenberg chains as quantum sensors as explored for excited states quantum phase transitions \cite{cejnar2021ExcitedstateQuantumPhase}. Yet here, the deviations in $\Delta$ trigger a response in the density of states and Floquet-driven system properties that exponentially grows with the chain length.
More broadly, the way the Bethe ansatz and modern representation theory combine in our analysis highlights the continuing importance of integrable systems as a laboratory for quantum many-body phenomena, giving us both exact results and conceptual frameworks on emergent structures in strongly correlated quantum matter and conformal field theories. 

\begin{acknowledgments}
YH is supported by the Partnership for Innovation, Education and Research (PIER) between Deutsches Elektronen-Synchrotron (DESY) and Universit\"at Hamburg (UHH), as well as the MIT Global Experience (MISTI) program. 
FG and TP acknowledge funding by the European Union (ERC, QUANTWIST, Project number 101039098) and the Cluster of Excellence
‘Advanced Imaging of Matter’ (EXC 2056, project ID 390715994).
The views and opinions expressed
are however those of the authors only and do not
necessarily reflect those of the European Union or
the European Research Council, Executive Agency. The authors are also thankful to Rafael Nepomechie, 
Kenta Suzuki, Elijah Bodish, and Martin Bonkhoff for valuable discussions and insights on the matter.   
\end{acknowledgments}

\bibliographystyle{apsrev4-2}
\bibliography{ref}
\clearpage
\newpage
\onecolumngrid

\appendix
\renewcommand{\theequation}{A.\arabic{equation}}
\setcounter{equation}{0}

\renewcommand{\thefigure}{A.\arabic{figure}}
\setcounter{figure}{0}

\section{Special Cases: $\Delta=0,\pm1$}
\label{sec:xx}
In this appendix, we briefly discuss the structure of product state energy degeneracy for the special anisotropies $\Delta=\pm1,0$ where additional symmetries manifest. 
These results are mostly known but special emphasis on the degeneracy provides a self-contained presentation.

\subsection{$\Delta=1$: XXX Model}
At $\Delta=1$, the system reduces to the XXX model \cite{heisenberg_zur_1928} and the full $\sutwo$ symmetry is restored: the Hamiltonian now commutes with all three generators of the total spin $\mathbf{S}_{\text{tot}}$. Hence, energy eigenstates form irreducible representations of $\sutwo$, labeled by quantum numbers $(S, M)$:
\begin{align}
(\mathbf{S}_{\text{tot}})^2 |S,M\rangle &= S(S+1)|S,M\rangle, \\
\stot |S,M\rangle &= M|S,M\rangle,
\end{align}
where $M \in \{-S, -S+1, \ldots, S-1, S\}$.
The energy depends only on $S$, not on $M$. Therefore, all $(2S+1)$ states in a given irreducible representation are degenerate.
Since the Hamiltonian depends only on the Casimir operator $(\mathbf{S}_{\text{tot}})^2$ 
\begin{equation}
H|S,M\rangle = E(S)|S,M\rangle.
\end{equation}
The fully polarized state $|\uparrow\cdots\uparrow\rangle$ with energy $\varepsilon=N/4$ is a highest-weight vector with the total spin quantum number $S = N/2$. All states in this irreducible representation $\{|N/2, M\rangle : M = -N/2, -N/2+1, \ldots, N/2\}$ are degenerate with degeneracy $2S + 1 = N + 1$. 
To prove that there is no additional degeneracy, the Bethe Ansatz provides the complete spectrum for the XXX model \cite{bethe_zur_1931,fabricius_bethes_2001}. Since the Hamiltonian is a function of the Casimir operator, it assigns different energies to different $S$ sectors. The Bethe Ansatz confirms that states with $S < N/2$ have $E \neq \varepsilon$ (See \refEq{eq:xxz_bae_ener} for more details). Thus, for each $M$, the unique state at $\varepsilon$ is the one with $S = N/2$. Hence, for $\Delta=1$, 
\begin{equation}
    d = N+1.
\end{equation}

\subsection{$\Delta=-1$: Staggered Transformation}
For $\Delta= -1$, one can apply a staggered rotation, a $\pi$ rotation about the $z$-axis
on alternating sites \cite{Takahashi1999,YangYang1966}. This transformation acts as
\begin{align}
\begin{split}
    \text{Odd sites: } &S^x \to S^x, \quad S^y \to S^y, \quad S^z \to S^z, \\
    \text{Even sites: } &S^x \to -S^x, \quad S^y \to -S^y, \quad S^z \to S^z.
\end{split}
\end{align}
For even $N$, this transformation is a unitary equivalence between the periodic $\Delta=-1$ model and the periodic ferromagnetic XXX model, which means the Hamiltonian in Eq.~\eqref{eq:h_1d} with an overall minus sign and $\Delta=+1$ \cite{Takahashi1999}, i.e.,
\begin{equation}
    d=N+1\text{ for even $N$}. 
\end{equation}
For odd $N$, the only product eigenstates are the trivial ones \cite{gerken_all_2023}.
Hence,
\begin{equation}
{d \geq 2 \text{ for odd $N$}}.
\end{equation}

\subsection{$\Delta = 0$: XX Model}

At $\Delta = 0$, the Hamiltonian reduces to the 1D XX model. In this case, the Jordan–Wigner transformation maps the chain to a free fermion model \cite{Jordan1928berDP}: using
\begin{equation}
    \label{eq:jw}
    \begin{aligned}
        & c_j = \sigma_j^-\prod_{k=1}^{n-1}\sigma_k^z, 
    \end{aligned}
\end{equation}
the Hamiltonian becomes
\begin{equation}
    \label{eq:jw,h}
    H = \frac{J}{2}\left[\sum_j\left(c_{j+1}^\dagger c_j + c_j^\dagger c_{j+1} \right)-(-1)^r(c_n^\dagger c_1+c_1^\dagger c_n)\right],
\end{equation}
where $r$ is the total fermion number operator $r=\sum_{j=1}^nc_j^\dagger c_j$.
After the Jorgan-Wigner transformation, we can describe a basis of eigenstates using the angles
\begin{equation}
\label{eq:xx-cosine}
p_i =
\begin{cases}
\frac{2m\pi}{N} &  \text{if } r \text{ odd, }\\
\frac{(2m+1)\pi}{N} & \text{if } r \text{ even,}
\end{cases}
\end{equation}
where $m\in\left\{0,1,\ldots,N-1\right\}$ \cite{deguchi_sl_2_2000}.
The question about the null-space degeneracy can then be rephrased as:
given a set of angles $\{ p_1, p_2, \ldots, p_m \}$ and an integer $M\geq 0$, what is the number of combinations of $M$ different angles such that the sum of their cosines is zero?  
\begin{equation}
\label{eq:xx-cosine-degen}
\sum_{k=1}^r \cos(p_{i_k}) = 0.
\end{equation}
Equivalently, we want to count certain purely imaginary sums of distinct $N^{\text{th}}$ and $2N^{\text{th}}$ roots of unity, which is an unsolved mathematical problem \cite{lenstra1978vanishing,LAM200091,bonacina2023vanishing}, cf. the Online Encyclopeida of Integer Sequences (OEIS) \cite{oeisA107848} and \cite{oeisA103314}.
By brute force combinatorics, the degeneracies of the nullspaces are 
\begin{align}
    \begin{array}{c|c|c|c|c|c|c|c|c|c|c|c|c|c|c|c|c|c|c|c|c|c|c|}
     n
 & 2 & 3 & 4 & 5 & 6 & 7 & 8 & 9 & 10 & 11 & 12 & 13 & 14 & 15 & 16 & 17 & 18 & 19 & 20 & 21 & 22 & \dots
    \\ \hline
    \dim\left(\mathcal{D}\right) & 2 & 2 & 10 & 2 & 14 & 2 & 60 & 20 & 74 & 2 & 386 & 2 & 434 & 346 & 2160 & 2 & 6124 & 2 & 13106 & 2866 & 15554 & \dots
    \end{array},
\end{align}
collected in \cite{oeisA392387}, which are consistent with the directly obtained degeneracies in \refTab{tab:degeneracies}.
According to the product eigenstate formulation in the XXZ model \cite{gerken_all_2023}, $\Delta = 0$ only accommodates two trivial product eigenstates if $N\not\equiv0\mod{4}$, and two additional nontrivial product eigenstates if  $N\equiv0\mod{4}$. The number of non-product states in the nullspace therefore exceeds the product states by far.
For even $N$, there is a known lower bound for the degeneracy of the nullspace \cite{Karle2021_2}
\begin{equation}
    \nullity(H) \geq 2^\frac{N}{2} \ \text{(for $N$ even)},
\end{equation}
originating from the interplay between sublattice symmetry and spatial inversion symmetry.
The relevant sublattice symmetry rotates every second spin about $180^\circ$, i.e., $\left( S^x_i, S^y_i, S^z_i\right) \mapsto \left( -S^x_i, -S^y_i, S^z_i\right)$, for $i$ even, and $H\mapsto-H$.

\section{Review of the Bethe Ansatz and the Transfer Matrix Method}
\label{sec:transfer_matrix}

To calculate the Bethe roots at roots of unity for Fig.~\ref{fig:descendant_tower_b}, we employ the the coordinate Bethe Ansatz and the transfer matrix method. We review the method and our approach here to facilitate reproducibility. 

\subsection{Coordinate Bethe Ansatz}
In order to understand the degeneracy of the XXZ model, we first review the Bethe ansatz which exactly solve the 1D XXZ Heisenberg chain \cite{yang1967,bethe_zur_1931,karbach1998introduction}. For a chain length of $N$, in the sector of $M$ spin-up sites, we denote the state $\ket{n_1,\dots,n_M}$ as the basis state with the spins up at positions $1\leq n_1,\dots,n_M\leq N$ and all other spins down. One writes the coordinate Bethe ansatz \cite{bethe_zur_1931}:
\begin{equation}
\label{eq:xxz_equation}
    \ket{\Psi}=\sum_{1\leq n_1<\dots<n_M\leq N}a(n_1,\dots,n_M)\ket{n_1,\dots,n_M},
\end{equation}
with the wavefunction amplitude
\begin{equation}
\label{eq:xxz_ansatz}
a(n_1,\cdots,n_M)=\sum_{\sigma\in S_M}A_\sigma\exp{\left(i\sum_{j=1}^Mk_{\sigma(j)}n_j\right)}.
\end{equation}
$k_j$ is the quasi-momentum assigned to the $j^{\text{th}}$ magnon (spin-up excitation), which we intend to solve. It is often convenient to parameterize $k_j$ in terms of rapidities $v_j$
\begin{equation}
    \label{eq:rapidity}
e^{ik_j}=\frac{\sinh{(v_j+i\gamma/2)}}{\sinh{(v_j-i\gamma/2)}}.
\end{equation}
$S_M$ is the permutation group on $M$ labels, such that if permutations $a$ and $b$ differs by switching the momenta of two sites $i$ and $j$, i.e. $b=\sigma_{ij}a$, $A_a$ and $A_b$ are related by \cite{wruff2013xxz,bethe_zur_1931}
\begin{align}
    \label{eq:xxz_phase}
&A_b=A_a\exp{(i\zeta_{ij})},\\
        &e^{i\zeta{ij}}\equiv\frac{e^{i(k_i+k_j)}+1-2\Delta e^{ik_i}}{e^{i(k_i+k_j)}+1-2\Delta e^{ik_j}}.
\end{align}
One can now impose the periodic boundary conditions on the 1D XXZ chain and reach the Bethe Ansatz Equations (BAE) for the XXZ model \cite{bethe_zur_1931}:
\begin{equation}
\left(\frac{\sinh{(v_j+i\gamma/2)}}{\sinh{(v_j-i\gamma/2)}}\right)^N=\prod_{k\neq j}^M\frac{\sinh{(v_j-v_k+i\gamma)}}{\sinh{(v_j-v_k-i\gamma)}}.
\end{equation}
Given a set of $M$ Bethe roots $\{v_1,\dots,v_M\}$, the energy of the corresponding state is \cite{miao_q_2021}:
\begin{equation}
\label{eq:xxz_bae_ener}
E=\sum_{j=1}^M\frac{\sin^2{\gamma}}{2\sinh{(v_j+i\gamma/2)}\sinh{(v_j-i\gamma/2)}}.
\end{equation}

\subsection{Transfer Matrix Method}

In the Quantum Inverse-Scattering Method \cite{faddeev1996algebraic}, we employ the transfer matrix to solve the BAE algebraically. We define the $R$-matrix for each pair of the adjacent spins at index $k$ and $k+1$ acting on the tensor product of an auxiliary space
$\mathbb C^{2}$ labeled by $a$ and a quantum space $\mathbb C^{2}$ labeled by $k$:
\begin{equation}
    \label{eq:r-matrix}
    R_{ak}(v)=\begin{pmatrix}
    \sinh(v+i\gamma)& 0& 0& 0\\
    0& \sinh{v}& \sinh{(i\gamma)}&0\\
    0& \sinh{(i\gamma)}&\sinh{v}&0\\
    0& 0& 0& \sinh(v+i\gamma)
\end{pmatrix}.
\end{equation}
This matrix solves the Yang--Baxter equation \cite{yang1967}:
\begin{equation}
    \label{eq:yang-baxter}
    R_{12}(v-u)\,R_{13}(v)\,R_{23}(u)
  =R_{23}(u)\,R_{13}(v)\,R_{12}(v-u).
\end{equation}
We implement the $N$ quantum sites of the entire chain
and form an ordered product
\begin{equation}
    \label{eq:monodromy}
    T_{a}(v)=R_{aN}(v)R_{a,N-1}(v)\cdots R_{a1}(v)
        =\begin{pmatrix}
            A(v)& B(v)\\C(v) &D(v)
        \end{pmatrix},
\end{equation}
where $N$ labels the pair of spins at index $N$ and $1$ as in the periodic boundary condition. The Yang--Baxter equation \refEq{eq:yang-baxter} turns into the RTT relation \cite{faddeev1996algebraic}:
\begin{equation}
    \label{eq:RTT}
    R_{12}(u-v)\,T_{1}(u)\,T_{2}(v)=T_{2}(v)\,T_{1}(u)\,R_{12}(u-v).
\end{equation}
Comparing the matrix entries of both sides, reveals the basic commutation rule
\begin{equation}
    \label{eq:transfer_commutataion}
    \begin{aligned}
        A(u)B(v)&=
  \frac{\sinh(u-v+i\gamma)}{\sinh(u-v)}B(v)A(u)-
  \frac{\sinh (i\gamma)}{\sinh(u-v)}B(u)A(v),
    \end{aligned}
\end{equation}
and an analogous formula with $D$ in place of $A$, which is used to commute $A$ and $D$ past the $B$'s. 

The XXZ Hamiltonian with periodic boundary conditions arises from the first logarithmic derivative at $v=0$:
\begin{equation}
    \label{eq:transfer_hami}
    H_{\text{XXZ}}
  =\frac{\partial}{\partial v}\ln T(v)\Big|_{v=0}
  +\text{constant}.
\end{equation}
Taking the partial trace over the auxiliary index we obtain the transfer matrix:
\begin{equation}
    \label{eq:transfer}
    \mathcal{T}(v)\equiv\mathrm{tr}_{a}T_{a}(v)=A(v)+D(v).
\end{equation}
From \refEq{eq:yang-baxter}, we can find that the transfer matrices commute at different $v$. 

The state corresponding to the Bethe roots $\{v_j\}$ can be generated as
\begin{equation}
\label{eq:bethe_roots}
|\{v_{j}\}_{j=1}^{M}\rangle
  =B(v_{1})\cdots B(v_{M})\ket{\emptyset},
\end{equation}
where $\ket{\emptyset}=\ket{\downarrow\cdots\downarrow}$ is the fully polarized reference state.
The rest of the components of $T_a(v)$ acts on the reference state as
\begin{equation}
    \label{eq:baxter-allup}
     C(v)|\emptyset\rangle=0,
  \quad
  A(v)|\emptyset\rangle=\alpha(v)|\emptyset\rangle,
  \quad
  D(v)|\emptyset\rangle=\delta(v)|\emptyset\rangle,
\end{equation}
where
\begin{equation}
\label{eq:tq-aux}
    \alpha(v)=\sinh^{N}\bigl(v+\tfrac{i\gamma}{2}\bigr),\quad
  \delta(v)=\sinh^{N}\bigl(v-\tfrac{i\gamma}{2}\bigr).
\end{equation}
Acting with $M$ creation operators $B(v_{1})\dots B(v_{M})$ on $\ket{\emptyset}$
and commuting $A$ and $D$ through the $B$'s gives
\begin{equation}
    \label{eq:baxter-commute}
    \begin{aligned}
        \mathcal T(v)|\{v_{j}\}_{1}^{M}\rangle
  =\Bigl[\,\alpha(v)\prod_{j=1}^{M}
            \tfrac{\sinh(v-v_{j}-\gamma)}
                  {\sinh(v-v_{j})}+
          \delta(v)\prod_{j=1}^{M}
            \tfrac{\sinh(v-v_{j}+\gamma)}
                  {\sinh(v-v_{j})}
    \Bigr]
    |\{v_{j}\}_{1}^{M}\rangle+(\text{extra terms}),
    \end{aligned}
\end{equation}
where the extra terms are proportional to terms like
$B(u)B(v_{2})\dots B(v_{M})|\emptyset\rangle$ where $u\neq v_1$ that vanish precisely when each $v_{j}$ obeys the BAE \refEq{eq:xxz_bae}. 
With those constraints satisfied, the coefficient in brackets becomes the eigenvalue $\Lambda(v)$ of the transfer matrix $\mathcal{T}(v)$:
\begin{equation}
    \label{eq:transfer_eigen}
    \begin{aligned}
        \Lambda(v)&=
    \alpha(v)\prod_{j=1}^{M}
       \tfrac{\sinh(v-v_{j}-i\gamma)}{\sinh(v-v_{j})}+\delta(v)\prod_{j=1}^{M}
       \tfrac{\sinh(v-v_{j}+i\gamma)}{\sinh(v-v_{j})}.
    \end{aligned}
\end{equation}
One further defines the Baxter polynomial
\begin{equation}
    \label{eq:baxter_poly}
    Q(v)=\prod_{j=1}^{M}\sinh(v-v_{j}).
\end{equation}
One can then arrive at the {Baxter's $TQ$ functional relation} \cite{faddeev1996algebraic,baxter1982exactly}, which simplifies the BAE into a linear equation. The roots of the polynomial $Q(v)$ are the Bethe roots of the system. 
\begin{equation}
\label{eq:functional}
\begin{aligned}
    &(-1)^M\Lambda(v)Q(v)=\sinh^N\left(v-\frac{i\gamma}{2}\right)Q(v+i\gamma)+\sinh^N\left(v+\frac{i\gamma}{2}\right)Q(v-i\gamma).
\end{aligned}
\end{equation}

In particular, Ref.~\cite{miao_q_2021} has derived the form of the $Q$ polynomials at roots of unity. At roots of unity, the form of the Baxter polynomial $Q(v)$ depends on whether one starts from the fully polarized state or from a primitive state. In the first case, beginning with the fully polarized state, the $Q$-polynomial is given explicitly by Ref.~\cite{miao_q_2021}:
\begin{equation}
    Q(v,\phi)
=\sum_{k=0}^{\ell_2-1}\Big(q^{-k-\tfrac12}t-q^{k+\tfrac12}t^{-1}\Big)^{N} e^{ik\phi},
\quad t=e^{v}.
\end{equation}
This form arises directly from Baxter’s TQ relation when $Q$ is generated by the fully polarized vector: for special values of the twist $\phi$ satisfying $1-q^{\ell N} e^{i\ell\phi}=0$ with $q^{\ell}=\pm 1$, one finds the periodicity condition $Q_(v+\eta,\phi)=e^{i\phi}Q(v,\phi)$ \cite{miao_q_2021}. The roots of the $Q$ polynomials are then the FM roots for this descendant tower.

In contrast, for states that descend from primitive roots, for the primitive state to be at the product state energy level, the primitive roots are all at infinity. Ref.~\cite{miao_q_2021} constructs the decomposition
\begin{equation}
    Q(v)=Q_r(v)Q_s(v)t^{n_{-\infty}-n_{+\infty}}, \quad t=e^{v},
\end{equation}
with contributions from regular roots $Q_r$, from finite-modulus FM strings $Q_s$, and from the multiplicities of $\pm\infty$ roots. $n_{\pm\infty}$ is the number of primitive roots at $\pm\infty$ respectively. 
The FM string part takes the form \cite{miao_q_2021}
\begin{equation}
    Q_s(v)\propto \prod_{m=1}^{n_\text{FM}}\left(t^{\ell}-e^{2\ell\alpha_m}\right),
\end{equation}
where $\alpha_m$ is the FM string center. For only infinite primitive roots and $\phi=0$, $Q_r\equiv 1$. The FM string centers are then found by solving the following quantization relation \cite{miao_q_2021} derived from Baxter's TQ relation:
\begin{equation}
    \sum_{k=0}^{\ell-1}\sinh^N\big(v+(k+\tfrac12)i\gamma\big)
e^{-(2k+1)(n_{-\infty}-n_{+\infty})\gamma}=0.
\end{equation}
We additionally verify the numerically obtained FM string center with a numerical solver following the algorithm of Refs.~\cite{fabricius_bethes_2001,miao_q_2021}.

\subsection{Connection between the Intertwiners and the Bethe Ansatz}
\label{appendix:connection_FM_aTL}

Ref.~\cite{Pinet_2022} works out the explicit forms of the intertwiners
\begin{align}
    &F^\mp_{(d,w);(t,v)}:\mathcal{H}_{N;t,v}^\mp\to\mathcal{H}_{N;d,w}^\mp\qquad\text{where}\qquad|\sigma_1\dots \sigma_N\rangle_v^\pm\mapsto F_\pm^{(n)}|\sigma_1\dots \sigma_N\rangle_{w}^\pm,\\
    &E^\pm_{(t,v);(d,w)}: \mathcal{H}_{N;d,w}^\pm\to\mathcal{H}_{N;t,v}^\pm\qquad\text{where}\qquad
|\sigma_1\dots \sigma_N\rangle_{w}^\pm\mapsto E_\pm^{(n)}|\sigma_1\dots \sigma_N\rangle_v^\pm.
\end{align}
Here, $|\sigma_1\dots \sigma_N\rangle_{w}$ indicates states associated with the Hamiltonian with twisted boundary condition \refEq{eq:twisted_bc} with twist $w$ where $\sigma_j=\pm$, and the $E_\pm^{(n)}$ and $F_\pm^{(n)}$ act as
\begin{align}
F^{(n)}_{\pm}|\sigma_1\dots \sigma_N\rangle_w^{\pm} &= \sum_{\substack{1\leq j_1 < \dots  < j_n \leq N\\ \sigma_{j_1} = \dots  = \sigma_{j_n} = +}} \Big(\prod_{k=1}^n\prod_{j=j_{k}+1}^{j_{k+1}-1} q^{\pm k\sigma_j}\Big)|\sigma_1\dots \sigma_{j_1-1}(-)\sigma_{j_1+1}\dots \sigma_{j_2-1}(-)\sigma_{j_2+1}\dots \sigma_N\rangle_w^{\pm}
\end{align}
and
\begin{align}
E^{(n)}_{\pm}|\sigma_1\dots \sigma_N\rangle_v^{\pm} &= \sum_{\substack{1\leq j_1 < \dots  < j_n \leq N\\ \sigma_{j_1} = \dots  = \sigma_{j_n} = -}}\Big(\prod_{k=0}^{n-1}\prod_{j=j_k+1}^{j_{k+1}-1}q^{\pm(k-n)\sigma_j}\Big)|\sigma_1\dots \sigma_{j_1-1}(+)\sigma_{j_1+1}\dots \sigma_{j_2-1}(+)\sigma_{j_2+1}\dots \sigma_N\rangle_v^{\pm}.
\end{align}
These are Lusztig's divided powers \cite{LUSZTIG1988}. 
One might naively expect the total spin operators $S^\pm_{\text{tot}} = \sum_j S^\pm_j$ to be the relevant raising and lowering operators for the XXZ chain, as they are in the isotropic Heisenberg case.
Yet, neither they nor their $q$-analogs mediate between degenerate states directly, since the XXZ Hamiltonian with periodic boundary conditions does not possess a $U_q(\mathfrak{sl}_2)$ symmetry. Yet, at commensurate roots of unity, it can be shown that the operators $S^{\pm({\ell})}$, i.e., the ${\ell}^\text{th}$ divided powers of the quantum group generators
\begin{align}
S^{\pm({\ell})}  = \sum_{1 \le j_1 < \cdots < j_{\ell} \le {N}} &\underbrace{{q}^{N S^z} \otimes \cdots
\otimes {q}^{N S^z}}_{j_1-1 \text{ times}} \otimes S^{\pm} \otimes \underbrace{{q}^{(N-2) S^z} \otimes  \cdots \otimes {q}^{(N-2) S^z}}_{j_2 - j_1 -1 \text{ times}}  \otimes
\nonumber \\
&S^{\pm} \otimes {q}^{(N-4) S^z} \otimes \cdots \otimes {q}^{S^z} \otimes S^{\pm} \otimes \underbrace{{q}^{-{N} S^z} \otimes \cdots \otimes {q}^{-{N} S^z}}_{N - j_\ell \text{ times}},
\label{eqn:deguchi_string}
\end{align}
commute with the Hamiltonian in the sector where $\ell\mid d$, see Eq.~(1.2) in~\cite{deguchi_xxz_2007}.
In this equation, the operators $S^\pm$ and $S^z$ act on the local Hilbert space.
Note that a prefactor that is zero at the $\ell^\text{th}$ roots of unity has been lifted, a construction going back to Lusztig~\cite{LUSZTIG1988} and applied to spin chains soon after~\cite{PASQUIER1990523}.
These operators $S^{\pm({\ell})}$ coincide precisely with the intertwiners $E^{({\ell})}_\pm$ and $F^{({\ell})}_\pm$ in Lemma~4.13 of Ref.~\cite{Pinet_2022}, up to conventions. Since $S^{\pm({\ell})}$ commutes with the Hamiltonian and raises/lowers $S^z$ by ${\ell}$, it maps a Bethe eigenstate with $m$ magnons to another eigenstate with the same energy but $m\pm {\ell}$ magnons by adding an FM string, whose contribution to the energy vanishes at roots of unity. Thus, the intertwining operators $E^{({\ell})}$ and $F^{({\ell})}$ are precisely the generators that add or remove complete FM strings from Bethe states, and together with the conjugates $T^{\pm({\ell})} = (S^{\pm({\ell})})^*$ they generate the $L(\mathfrak{sl}_2)$ that underlies the enhanced degeneracy structure at roots of unity.
Hence, the intertwiner $F^\mp_{(d,w);(t,v)}$ and $E^\pm_{(t,v);(d,w)}$ are generalizations of FM strings across sectors with different twists. 

\section{Detailed Proofs of the degeneracies}\label{sec:proof}
This appendix provides our detailed proofs of the results in the main text.

\begin{proof}[Proof of Lemma~\ref{lem:transport}]
For notation simplicity, we label $\mathcal{H}_{N;t,x}=:\mathcal H_1$, $\mathcal{H}_{N;d,z}=:\mathcal H_2$. 
Let the morphism be $\mu:\mathcal{H}_1\to\mathcal{H}_2$. We treat $H_\text{XXZ}$ as an element of the aTL$_N$ algebra, and the explicit form of $H_\text{XXZ}$ as its representations on the corresponding Hilbert spaces $\mathcal H_1$ and $\mathcal H_2$. As $\mu$ is aTL$_N$-linear, we can write the representations of aTL$_N$ on the two modules: 
\begin{equation}
    \begin{aligned}
        &\pi_1:\mathrm{aTL}_N\to\mathrm{End}(\mathcal{H}_1),\\
        &\pi_2:\mathrm{aTL}_N\to\mathrm{End}(\mathcal{H}_2).
    \end{aligned}
\end{equation}
Then we have
\begin{equation}
    \mu\pi_1(a)v_1=\pi_2(a)\mu v_1\quad\forall a\in\mathrm{aTL}_N,\;v_1\in\mathcal{H}_1.
\end{equation}
In particular, as $H_\text{XXZ}\in\mathrm{aTL}_N$, if $v_1$ is an eigenvector of the representation of $H_\text{XXZ}$ on $\mathcal{H}_1$ such that
\begin{equation}
    \pi_1(H_\text{XXZ})v_1=\lambda v_1,
\end{equation}
then we can commute $\lambda\in\mathbb{R}$ across $\mu$ and obtain
\begin{equation}
    \pi_2(H_\text{XXZ})\mu v_1=\lambda \mu v_1.
\end{equation}
Hence $\mu v_1\in\mathcal{H}_2$ is an eigenvector of the representation of $H_\text{XXZ}$ on $\mathcal{H}_2$ with the same eigenvalue $\lambda$. 

\end{proof}

\begin{proof}[Proof of Lemma~\ref{lemma:restriction}]
Since each $d_i$ is $G$-linear and $H\in G$, we have $\rho_{V_{i+1}}(H)\circ d_i = d_i\circ \rho_{V_i}(H)$ for all $i$. Because $\rho_{V_i}(H)$ is diagonalizable, the spectral projector $P_{i,h}$ onto $V_{i,h}$ is a polynomial in $\rho_{V_i}(H)$; hence $d_i P_{i,h} = P_{i+1,h} d_i$.
If $v\in V_{i,h}$ then $\rho_{V_{i+1}}(H)(d_i v)=d_i(\rho_{V_i}(H)v)=hd_i v$, so $d_i$ restricts to $d_{i,h}:V_{i,h}\to V_{i+1,h}$. For $0<i<n$,
$\operatorname{im}(d_{i-1,h})\subseteq\ker(d_{i,h})$ since $d_id_{i-1}=0$. Conversely, if $x\in\ker(d_{i,h})$, exactness gives $y\in V_{i-1}$ with $d_{i-1}y=x$, and then
\begin{equation}
    d_{i-1}(P_{i-1,h}y)=P_{i,h}d_{i-1}y=P_{i,h}x=x,
\end{equation}
so $x\in\operatorname{im}(d_{i-1,h})$. Injectivity at the left and surjectivity at the right are preserved by the same projection argument.
\end{proof}

\begin{proof}[Proof of Corollary~\ref{cor:degen-thm-non}]
 We split the proof in two cases: if $q^\ell=1$, and if $q^\ell=-1$. 

\paragraph{$q^\ell=1$.} Here $\ell$ is odd.
We know that for $N=k\ell$, where $k\in\mathbb{N}$, the sequences of sectors are the ones in Fig. \ref{fig:diagram_oddl}. For generic $N$ not divisible by $\ell$, write $N=k\ell-2r$, where $k,r\in\mathbb{Z}$. In order for the extra degeneracy at the product state energy level to exist, the sequences of sectors connected by the intertwiner have to start with $(N;N,m)$ for some $m$ and have sectors of zero twist in the sequences. In particular, we can see that the only sequence that exists for generic $N=k\ell-2r$ not divisible by $\ell$ is the truncated sequence from $(k\ell;k\ell,q^r)$, from which we take all the sectors that evolve from $(k\ell-2r,1)$. We show the sequence in Fig. \ref{fig:diagram_oddl_nondiv}. 
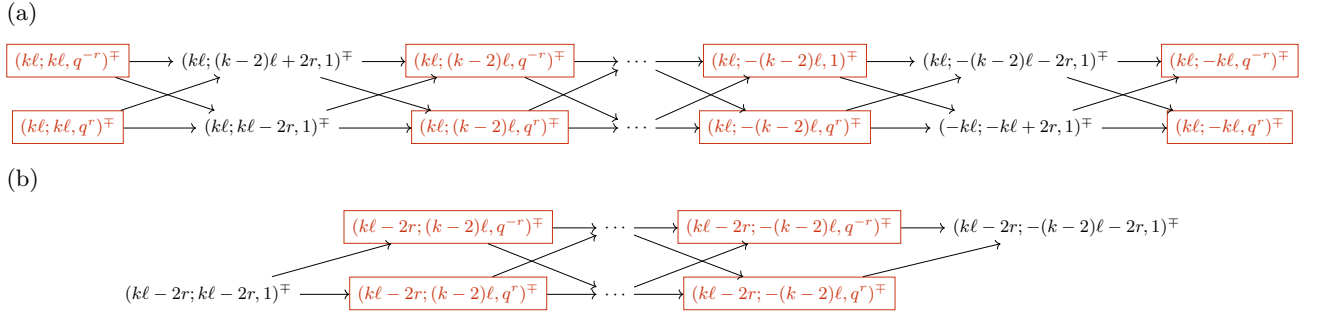
\begin{figure*}[]
    \centering
    \begin{subfigure}{\textwidth}
        \caption{}
        \centering
        \adjustbox{scale=0.75,center}{
        \begin{tikzcd}
          |[draw=cbRed]| \color{cbRed}{(k\ell;k\ell,q^{-r})^\mp}
            \arrow[dr]
             \arrow[r]
          &
          (k\ell;(k-2)\ell+2r,1)^\mp
            \arrow[dr]
             \arrow[r]
          &
          |[draw=cbRed]| \color{cbRed}{(k\ell;(k-2)\ell,q^{-r})^\mp}
            \arrow[dr]
             \arrow[r]
          &
          \cdots
            \arrow[dr]
             \arrow[r]
          &
          |[draw=cbRed]| \color{cbRed}{(k\ell;-(k-2)\ell,1)^\mp}
            \arrow[dr]
             \arrow[r]
             &
        {(k\ell;-(k-2)\ell-2r,1)^\mp}
            \arrow[dr]
             \arrow[r]
             &
          |[draw=cbRed]| \color{cbRed}{(k\ell;-k\ell,q^{-r})^\mp}
          \\
          |[draw=cbRed]| \color{cbRed}{(k\ell;k\ell,q^{r})^\mp}
          \arrow{ur}
          \arrow[r]
          &
          (k\ell;k\ell-2r,1)^\mp
          \arrow{ur}
          \arrow[r]
          &
          |[draw=cbRed]| \color{cbRed}{(k\ell;(k-2)\ell,q^{r})^\mp}
          \arrow{ur}
          \arrow[r]
          &
          \cdots
          \arrow{ur}
          \arrow[r]
          &
          |[draw=cbRed]| \color{cbRed}{(k\ell;-(k-2)\ell,q^{r})^\mp}
          \arrow{ur}
          \arrow[r]
           &
        {(-k\ell;-k\ell+2r,1)^\mp}
          \arrow{ur}
          \arrow[r]
           &
          |[draw=cbRed]| \color{cbRed}{(k\ell;-k\ell,q^{r})^\mp}
        \end{tikzcd}
        }
        \label{fig:diagram_oddl_nondiv_a}
    \end{subfigure}
    \\[0.5em]
    \begin{subfigure}{\textwidth}
        \caption{}
        \centering
        \adjustbox{scale=0.75,center}{
        \begin{tikzcd}
          &
          |[draw=cbRed]| \color{cbRed}{(k\ell-2r;(k-2)\ell,q^{-r})^\mp}
            \arrow[dr]
             \arrow[r]
          &
          \cdots
            \arrow[dr]
             \arrow[r]
          &
          |[draw=cbRed]| \color{cbRed}{(k\ell-2r;-(k-2)\ell,q^{-r})^\mp}
             \arrow[r]
             &
        {(k\ell-2r;-(k-2)\ell-2r,1)^\mp}
          \\
          (k\ell-2r;k\ell-2r,1)^\mp
          \arrow{ur}
          \arrow[r]
          &
          |[draw=cbRed]| \color{cbRed}{(k\ell-2r;(k-2)\ell,q^{r})^\mp}
          \arrow{ur}
          \arrow[r]
          &
          \cdots
          \arrow{ur}
          \arrow[r]
          &
          |[draw=cbRed]| \color{cbRed}{(k\ell-2r;-(k-2)\ell,q^{r})^\mp}
          \arrow{ur}
        \end{tikzcd}
        }
        \label{fig:diagram_oddl_nondiv_b}
    \end{subfigure}
    \caption{The truncation of sequences for $q^\ell=1$ and $\ell\nmid N$. (\subref{fig:diagram_oddl_nondiv_a}) The full sequence for $N=k\ell$ that starts at twist $\log(q^r)$. (\subref{fig:diagram_oddl_nondiv_b}) The truncated sequence for $N=k\ell-2r$ that starts with $N=k\ell-2r$ and twist $0$. }
    \label{fig:diagram_oddl_nondiv}
\end{figure*}
We hence just need to count the multiplicity from this truncated sequence. Since all the sectors with $\ell\mid d$ there start with $(k-2)\ell$, the multiplicity from the $L(\mathfrak{sl}_2)$ highest weight vector is $2^{k-1}$. Relabeling $k$ with $N$, we arrive at the multiplicity bound of $2^{2\lfloor\frac{N}{2\ell}+\frac{1}{2}\rfloor}$ for odd $N$ and $2^{2\lfloor\frac{N}{2\ell}\rfloor+1}$ for even $N$. 

\paragraph{$q^\ell=-1$.}
Here, the situation splits into the cases where $N$ is odd or even. If $N$ is even, the truncated sequences behave as Fig. \ref{fig:diagram_ql-1_evenn_nondiv} to give the multiplicity of $2^{2\lfloor\frac{N}{2\ell}\rfloor+1}$.  
If $N$ is odd, $\ell$ is odd, and Lemma \ref{lem:oddl_oddN} shows that the zero-twist sectors do not have a successor. Therefore, the total multiplicity is just the spin-flip multiplicity of $2$, same as the multiplicity at generic anisotropies.     
\end{proof}

\begin{figure*}[]
    \centering
 \adjustbox{scale=1.0,center}{
\begin{tikzcd}
  &
  |[draw=cbRed]| \color{cbRed}{(N;2(k-1)\ell,-q^{-r})^\mp}
    \arrow[dr]
     \arrow[r]
  &
    |[draw= cbBlue, dashed]| \color{cbBlue}{(N;2(k-2)\ell+2r,-1)^\mp}
    \arrow[dr]
     \arrow[r]
  &
  |[draw=cbRed]| \color{cbRed}{(N;2(k-2)\ell,q^{-r})^\mp}
    \arrow[dr]
     \arrow[r]
  &
  \cdots
  \\
  (N;2k\ell-2r,1)^\mp
  \arrow{ur}
  \arrow[r]
  &
  |[draw=cbRed]| \color{cbRed}{(N;2(k-1)\ell,-q^{r})^\mp}
  \arrow{ur}
  \arrow[r]
  &
    {(N;2(k-1)\ell-2r,1)^\mp}
  \arrow{ur}
  \arrow[r]
  &
  |[draw=cbRed]| \color{cbRed}{(N;2(k-2)\ell,q^{r})^\mp}
  \arrow{ur}
  \arrow[r]
  &
  \cdots
\end{tikzcd}
}
 \adjustbox{scale=1.0,center}{
\begin{tikzcd}
  &
 |[draw=cbRed]| \color{cbRed}{(N;2(k-1)\ell,q^{-r})^\mp}
    \arrow[dr]
     \arrow[r]
  &
  {(N;2(k-2)\ell+2r,1)^\mp}
    \arrow[dr]
     \arrow[r]
  &
  |[draw=cbRed]| \color{cbRed}{(N;2(k-2)\ell,-q^{-r})^\mp}
    \arrow[dr]
     \arrow[r]
  &
  \cdots
  \\
 |[draw= cbBlue, dashed]| \color{cbBlue}{(N;2k\ell-2r,-1)^\mp}
  \arrow{ur}
  \arrow[r]
  &
  |[draw=cbRed]| \color{cbRed}{(N;2(k-1)\ell,q^{r})^\mp}
  \arrow{ur}
  \arrow[r]
  &
   |[draw= cbBlue, dashed]| \color{cbBlue}{(N;2(k-1)\ell-2r,-1)^\mp}
  \arrow{ur}
  \arrow[r]
  &
  |[draw=cbRed]| \color{cbRed}{(N;2(k-2)\ell,-q^{r})^\mp}
  \arrow{ur}
  \arrow[r]
  &
  \cdots
\end{tikzcd}
}
    \caption{The two truncated sequences for $q^\ell=-1$, $\ell\nmid N$, and even $N$. These two sequences are the sequences that contain sectors of zero twist.}
    \label{fig:diagram_ql-1_evenn_nondiv}
\end{figure*}
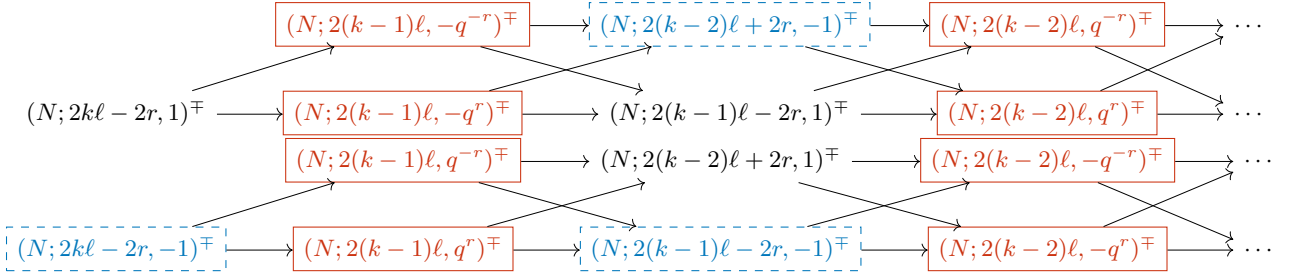

\section{Numerically determining the degeneracy at product state energy}
\label{sec:numerics}

This section explains how we obtain the degeneracies of the periodic XXZ Heisenberg chain in \refEq{eq:h_1d} at product state energy $\varepsilon$ and roots of unity, which we collect in \refTab{tab:degeneracies}. To this end we determine the nullity of $H - \varepsilon,$ with $\varepsilon=\Delta N /4$ in \refEq{eq:product_state_energy}, which we obtain by summing the nullities of its projections onto the $S_z$ eigenspaces. Calculations are sped up by the spin flip symmetry, see the dashed equatorial lines in \refFig{fig:descendant_tower_a} and \refFig{fig:diagram_twolines_b}. 
To this end, we collect random nullspace vectors and calculate the dimension of their span by the Gram-Schmidt process, where we neglect basis vectors with absolute values smaller than $10^{-5}$. To ensure convergence, we demand the sample size to be four times larger than the obtained basis size.

To construct random nullspace vectors of a $d \times d$ matrix $M$, we first choose a uniformly randomly distributed normal vector $b\in\mathbb{C}^{d}$ and construct the vector $y_b$ in the image of $M$ that is closest to $b$, i.e.,  $|y_b-b|$ is minimal. Then $n = b - y_b$ is the projection of $b$ onto the nullspace.
We find $y_b$ by minimizing $|Mx_b-b|$, which is equivalent to solving $M^2 x_b = M b$ with $x \in \mathbb{C}^d$. Here, we use the conjugate gradient method \cite{Barrett1994IterativeMethods} with accepted remaining absolute error of $10^{-11}$ and without preconditioner. We then set $y_b = M x$. For the problem at hand, we find that this method is more stable than solving $M y_b = 0$, possibly because of the latter's ambiguous solution for $y_b$. We further find no numerical advantage in using the minimal residual method or Jacoby or Chebychev preconditioners \cite{Barrett1994IterativeMethods}. The bottleneck of the approach is the exponentially increasing density of states around zero energy with chain length $N$, and the thereby reduced convergence of the minimization methods, which renders these and other methods, like the ones including matrix product states, imprecise.
\end{document}